\newif\ifacm\acmfalse
\ifacm
\documentclass[acmsmall, nonacm]{acmart}
\else
\documentclass[11pt]{article}
\fi

\ifacm
\else
\usepackage{amsmath, amssymb, amsthm, amscd, amsfonts}
\usepackage[margin=1in]{geometry}
\usepackage{url}
\usepackage{xcolor}
\usepackage[colorlinks=true,linkcolor=blue,citecolor=blue,allcolors=blue]{hyperref}
\hypersetup{
 colorlinks=true,
 linkcolor=blue,
 filecolor=blue,
 urlcolor=blue,
 allcolors=blue
}

\fi

\usepackage{fontspec}
\usepackage{fontenc}
\usepackage{polyglossia}
\setdefaultlanguage{english}
\setotherlanguages{tamil,telugu}
\newcommand{\tam}[1]{\foreignlanguage{tamil}{#1}}
\newcommand{\tel}[1]{\foreignlanguage{telugu}{#1}}

\newfontfamily\tamilfont[Script=Tamil]{Lohit-Tamil}[Path=./]
\newfontfamily\tamilfontsf[Script=Tamil]{Lohit-Tamil}[Path=./]
\newfontfamily\telugufont[Script=Telugu]{Lohit-Telugu}[Path=./]
\newfontfamily\telugufontsf[Script=Telugu]{Lohit-Telugu}[Path=./]

\usepackage{graphicx}
\usepackage{float}
\usepackage{mathrsfs}

\usepackage{enumitem}
\usepackage{xspace}
\usepackage{bbm}
\usepackage{thmtools,thm-restate}
\usepackage{import}
\usepackage{multicol}

\usepackage{algorithm, algorithmicx, algcompatible}
\usepackage[noend]{algpseudocode}

\usepackage{tikz}

\newcommand{\BigTimes}{\mathop{\vcenter{\hbox{\huge$\times$}}}} 

\theoremstyle{plain} \numberwithin{equation}{section}
\newtheorem{theorem}{Theorem}[section]

\newtheorem{fact}{Fact}

\newtheorem{lemma}[theorem]{Lemma}

\theoremstyle{definition}
\newtheorem{definition}[theorem]{Definition}

\newtheorem{remark}[theorem]{Remark}

\declaretheorem[name=Claim,numberwithin=section]{claim}

\ifacm
\usepackage{natbib}
\else
\fi

\newcommand{\true}[0]{\mbox{\em true}} 
\newcommand{\false}[0]{\mbox{\em false}} 

\newcommand{\N}[0]{\mathbb{N}} 
\newcommand{\nullset}[0]{\emptyset}

\newcommand{\set}[1]{\left\{#1\right\}} 
\newcommand{\p}[1]{\left(#1\right)}

\newcommand{\bra}[1]{\left\langle#1\right\vert} 
\newcommand{\ket}[1]{\left\vert#1\right\rangle} 
\newcommand{\braket}[2]{\left\langle #1 \vert #2 \right\rangle}

\newcommand{\wh}[1]{\widehat{#1}} 
\newcommand{\ol}[1]{\overline{#1}} 
\newcommand{\ul}[1]{\underline{#1}}

\newcommand{\cmdif}[0]{\textbf{if}\xspace}
\newcommand{\cmdthen}[0]{\textbf{then}\xspace}
\newcommand{\cmdelse}[0]{\textbf{else}\xspace}

\newcommand{\cmdand}[0]{\textbf{and}\xspace}

\newcounter{int}
\newcommand{\tab}[1][1]{\setcounter{int}{0}\loop\hspace{\algorithmicindent}\addtocounter{int}{1}\ifnum\value{int}<#1\repeat}

\newcommand{\Acal}[0]{\mathcal{A}}
\newcommand{\Bcal}[0]{\mathcal{B}}
\newcommand{\Ccal}[0]{\mathcal{C}}

\newcommand{\Gcal}[0]{\mathcal{G}}
\newcommand{\Hcal}[0]{\mathcal{H}}

\newcommand{\Lcal}[0]{\mathcal{L}}
\newcommand{\Mcal}[0]{\mathcal{M}}

\newcommand{\Rcal}[0]{\mathcal{R}}
\newcommand{\Scal}[0]{\mathcal{S}}

\newcommand{\cprec}[0]{\prec}   
  
\newcommand{\ceq}[0]{\asymp}   
   
\newcommand{\ncprec}{\nprec}    

\newcommand{\Q}[0]{\textsc{q}}      
\newcommand{\quantum}[0]{\Q}        

\renewcommand{\H}[0]{\Hcal}         
\newcommand{\qstate}[0]{\rho}       

\DeclareMathOperator{\tr}{tr}       

\newcommand{\classical}[0]{\textsc{c}}  

\newcommand{\CStates}[0]{\Sigma}        
\newcommand{\XCStates}[0]{T}               
\newcommand{\CStateshat}[0]{\wh{\CStates}}  

\newcommand{\cstate}[0]{\sigma}            
\newcommand{\xcstate}[0]{\tau}             
\newcommand{\xcstatehat}[0]{{\wh{\xcstate}}} 
\newcommand{\xcstatet}[0]{{\tilde{\xcstate}}}   

\newcommand{\waitset}[0]{waitset} 
\newcommand{\resp}[0]{res}  

\newcommand{\Sys}[0]{{\Scal}} 
\newcommand{\fstate}[0]{\Psi}       
\newcommand{\FStates}{\boldsymbol{\Psi}}
\newcommand{\fstatehat}[0]{{\wh{\Psi}}}       

\newcommand{\fstatet}[0]{{\tilde{\Psi}}}       

\newcommand{\Events}{E}
\newcommand{\Procs}[0]{\Pi} 
\newcommand{\proc}[0]{\pi} 
\newcommand{\proct}{\tilde{\pi}}
\newcommand{\Chans}[0]{\Gamma} 
\newcommand{\chan}[0]{\gamma} 
\newcommand{\gchan}[2]{{\boldsymbol{\chan}[#1 \to #2]}} 
\newcommand{\Msgs}[0]{\Mcal} 
\newcommand{\msg}[0]{\mu} 
\newcommand{\Comps}[0]{\Ccal} 
\newcommand{\comp}[0]{c} 

\newcommand{\A}[0]{{\Acal}}     
\newcommand{\At}[0]{\tilde{\A}}  

\newcommand{\deltahat}[0]{\wh{\delta}}  

\newcommand{\Hist}{\mathscr{H}}

\newcommand{\X}[0]{X}           
\newcommand{\Xprime}[0]{{\X'}}     
\newcommand{\Xprimeprime}[0]{{\X''}}     
\newcommand{\Xhat}[0]{\wh{\X}}  
\newcommand{\Xt}[0]{{\tilde{\X}}}    

\newcommand{\Y}[0]{Y}           
\newcommand{\Yprimeprime}[0]{{\Y''}}     
\newcommand{\Yhat}[0]{\wh{\Y}}  
\newcommand{\Yt}[0]{{\tilde{\Y}}}    

\newcommand{\Zt}[0]{{\tilde{Z}}}    

\newcommand{\Wt}{\tilde{W}}

\newcommand{\U}[0]{U}      
\newcommand{\V}[0]{V}       
\newcommand{\Ut}[0]{{\tilde{\U}}}          
\newcommand{\Vt}[0]{{\tilde{\V}}}           

\newcommand{\E}[0]{E}  
\newcommand{\Ehat}[0]{\wh{\E}}  
\newcommand{\Et}[0]{\tilde{\E}}  

\newcommand{\GOps}[0]{\Gcal}    
\newcommand{\gop}[0]{G}         

\newcommand{\cop}[0]{\Ccal}         
\newcommand{\qop}[0]{\Lambda}    

\newcommand{\Res}[0]{\Rcal}     
\newcommand{\res}[0]{R}         

\newcommand{\Receive}[0]{\textsc{Receive}} 
\newcommand{\Broadcast}[0]{\textsc{Broadcast}} 

\newcommand{\send}{\mathsf{send}}
\newcommand{\receive}{\mathsf{receive}}

\newcommand{\invoke}{\mathsf{invoke}}
\newcommand{\respond}{\mathsf{respond}}

\newcommand{\Invoke}[0]{\textsc{Invoke}}
\newcommand{\Respond}[0]{\textsc{Respond}}

\newcommand{\Execute}[0]{\textsc{Execute}}
\newcommand{\NewMessage}[0]{\textsc{NewMessage}}

\newcommand{\cc}{\mathop{::}}

\newcommand{\len}[0]{\textit{len}\xspace} 
\newcommand{\contents}[0]{\textit{contents}\xspace} 
\newcommand{\lab}[0]{\textit{label}\xspace} 
\newcommand{\op}[0]{\textit{op}\xspace} 

\newcommand{\Append}[0]{\textsc{Append}}
\newcommand{\ProcessNewGlobalOp}[0]{\textsc{ProcessNewGlobalOp}}

\newif\ifnotes
\notestrue
\ifnotes
    \newcommand{\anote}[1]{\textcolor{red}{(Anand: #1)}}
    \newcommand{\snote}[1]{\textcolor{blue}{(Siddhartha: #1)}}
\else 
    \newcommand{\anote}[1]{}
    \newcommand{\snote}[1]{}
\fi
\title{Asynchronous Quantum Distributed Computing: Causality, Snapshots, and Global Operations}
\ifacm
\author{Siddhartha 
Visveswara Jayanti (\tel{సిద్ధార్థ విశ్వేశ్వర జయంతి})}
\affiliation{\institution{Dartmouth} \city{Hanover} \state{NH} \country{USA}}
\email{svj@dartmouth.edu}
\author{Anand Natarajan (\tam{ஆனந்த் நடராஜன்})}
\affiliation{\institution{MIT} \city{Cambridge} \state{MA}
   \country{USA}}
\email{anandn@mit.edu}
\else
\author{Siddhartha Visveswara Jayanti (\tel{సిద్ధార్థ విశ్వేశ్వర జయంతి}) \thanks{\texttt{svj@dartmouth.edu}} \\ \small{Dartmouth} \and Anand Natarajan (\tam{ஆனந்த் நடராஜன்}) \thanks{\texttt{anandn@mit.edu}}\\\small{MIT}}
\fi
\date{April 9, 2026}

\begin{document}

\ifacm
\begin{abstract}
We initiate the study of asynchronous quantum distributed systems, focusing on the case of implementing atomic \emph{quantum global operations} that can be decomposed into a collection of local operations on the components of the system. A simple example of such an operation is a \emph{quantum snapshot} in which the whole system is instantaneously measured. Based on the classical snapshot algorithm of Chandy and Lamport, we design a quantum distributed algorithm to implement such decomposable global operations, which we call the \emph{QGO Algorithm}. The analysis of our algorithm shows that arguments based on Lamport's computational causality remain valid in the quantum world, even though, due to entanglement, causality is not manifest from the standard description of the system in terms of a (global) quantum state. Our other contributions include a formal model of quantum distributed computing, and a formal specification for the desired behavior of a global operation, which may be of interest even in classical settings (such as in the setting of randomized algorithms).    
\end{abstract}
\maketitle
\else
\maketitle
\begin{abstract}
We initiate the study of asynchronous quantum distributed systems, focusing on the case of implementing atomic \emph{quantum global operations} that can be decomposed into a collection of local operations on the components of the system. A simple example of such an operation is a \emph{quantum snapshot} in which the whole system is instantaneously measured. Based on the classical snapshot algorithm of Chandy and Lamport, we design a quantum distributed algorithm to implement such decomposable global operations, which we call the \emph{QGO Algorithm}. The analysis of our algorithm shows that arguments based on Lamport's computational causality remain valid in the quantum world, even though, due to entanglement, causality is not manifest from the standard description of the system in terms of a (global) quantum state. Our other contributions include a formal model of quantum distributed computing, and a formal specification for the desired behavior of a global operation, which may be of interest even in classical settings (such as in the setting of randomized algorithms).    
\end{abstract}
\fi

\section{Introduction}
The analysis of asynchronous distributed computer systems relies on an understanding of \emph{locality} and \emph{causality}, creating a link between distributed systems and fundamental physics. This link means that distributed systems could behave in fundamentally different ways in worlds with different physical models of time and causality: for instance, to handle a world described by special relativity (where there is no global time and observers do not agree on the temporal order of events), basic concepts like that of linearizability require modification~\cite{GilbertGolabRelativisticDistributedSystems}, and it was only recently shown that linearizable systems in the ``Newtonian'' sense remain linearizable in a relativistic world~\cite{Jayanti2025}. 

Could quantum mechanics make a meaningful difference here as well? It is well known that quantum entanglement, as described in the standard ``Schr\"{o}dinger'' formalism of quantum mechanics in terms of wave functions and unitary operators, can cause the \emph{appearance} of instantaneous, nonlocal action. This point was famously raised by Einstein, Podolsky, and Rosen~\cite{einstein1935can}, who observed that, in the standard quantum formalism, if two systems $A$ and $B$ are in the ``EPR pair state''
\[ \ket{\psi}_{AB} = \frac{1}{\sqrt{2}} (\ket{0}_A \ket{0}_B + \ket{1}_A\ket{1}_B),\]
a measurement of system $A$ will cause \emph{both} systems to \emph{instantaneously} collapse to either the state $\ket{0}_A\ket{0}_B$ or $\ket{1}_A\ket{1}_B$, no matter how far system $A$ is from system $B$. Moreover, Bell's theorem~\cite{bell1964einstein} shows that this apparent nonlocality is in some sense ``real'': there are empirically realizable scenarios where entangled quantum systems, through local operations, can produce correlations in their classical measurement outcomes that are impossible to obtain in any classical local system. Quantum nonlocality can even be used to show that constant-depth quantum circuits can solve a computational problem that requires a logarithmic-depth circuit classically~\cite{bravyi2018quantum}, which can be viewed as a distributed computational advantage if one imagines the qubits in the circuit to be located on separate processors. 

Despite this apparent or real nonlocality in quantum mechanics, it is widely known to (or perhaps believed by) quantum information scientists that causality in quantum computational systems works essentially the same way as it does classically: in particular, causal influences spread locally. For instance, entanglement does not permit messages to be sent instantaneously, faster than the speed of light~\cite{eberhard1978bell,eberhard1989quantum}, and ``light-cone'' arguments to analyze the spread of causal information in quantum circuits are very common (e.g.~\cite{YunchaoThesis}). These are specific examples, but what about more complex causal scenarios?  In particular, asynchronous distributed systems introduce considerations not present in the previous, essentially synchronous, examples. Unlike  a quantum circuit in which the locations and times of operations are fixed, in a distributed system, events can be controlled by an adversarial scheduler that can act arbitrarily, and one wishes to prove that certain properties hold for \emph{every} possible execution consistent with the adversary's power. Moreover, a distributed system can receive input and produce output at many different times, unlike a quantum circuit to compute a function which receives its input all at once and produces its output all at once. Do these additional considerations make a material difference to the behavior of causality, and if not, how do we prove mathematically that causality continues to behave as it does classically? 

Apart from this basic scientific motivation, there are strong practical reasons to study quantum asynchronous distributed systems. Quantum computing hardware is advancing rapidly, and work is already underway to design and build networks of quantum devices, or a ``quantum internet''~\cite{wehner2018quantum}. Such networks are very likely to have a significant space-based component in order to minimize attenuation of the signals, which will likely be sent using photons~\cite{khatri2021spooky}. In fact, satellites implementing quantum computing hardware are already in orbit as technical demonstrations~\cite{ren2017ground}. 

Thus, in light of these theoretical and practical motivations, we initiate the study of asynchronous quantum distributed systems in this work. We focus our efforts on constructing and analyzing a quantum analog of the snapshot algorithm of Chandy and Lamport~\cite{ChandyLamport}, which Lamport has described~\cite{LamportWebsite} as ``a straightforward application of the basic ideas from~\cite{LamportTimeClocksOrderingOfEvents}.'' This algorithm is meant to implement the ideal functionality of taking a \emph{global snapshot} of a distributed system at a single point in time. In the quantum setting, because quantum states cannot be copied (the ``no-cloning theorem''), and measurement disturbs the state, it is not clear what the correct notion of a snapshot should be. We propose that the appropriate generalization is that of a \emph{decomposable global operation}: a global operation on the system that consists of a tensor product of separate operations on the components. This recovers the classical snapshot functionality when the operation is taken to be measurement (in the computational basis), but can also model many more classical and quantum behaviors---for instance, one may consider the \emph{global encryption} operation, in which one encrypts the state of all components of the system under a common key. Our main results are the following:
\begin{itemize}
    \item An extension of Lamport's definition of computational causality to quantum systems (Definition~\ref{def:causality}) and a proof that executions of a system that have the same causality relation (``equicausal'') behave equivalently: in particular, they result in the same final state (Theorem~\ref{thm:equiv}). 
    \item A formal specification of how an atomic global operation should act (Definition~\ref{def:atomic-spec}).
    \item An algorithm, which we call the QGO algorithm, and a proof that it implements the specification in the following sense: every execution of the QGO algorithm is \emph{equicausal} with an execution that is operationally indistinguishable from an execution of the specification (Theorem~\ref{thm:main}).
\end{itemize}
Our proof of Theorem~\ref{thm:main} is similar to Chandy and Lamport's analysis of their algorithm, although some extra steps are needed to handle the fact that our global operations modify the state. Nevertheless we view this result as showing that classical causality-based techniques are also applicable to quantum systems.

Our work opens up several interesting directions for future work. One direction is in fact purely classical: what is the strongest correctness guarantee we can get for global operations applied to \emph{classical randomized} systems? In particular, our result shows that any execution of the algorithm can be ``matched'' by some execution of the specification, but one could hope for a stronger statement relating the entire probability distribution of executions, for a fixed adversary, to the distribution over executions of the specification. It is possible that this will involve issues similar to those which arise in the study of linearizability in randomized algorithms, where it was shown that \emph{strong linearizability} is the more appropriate concept~\cite{StrongLinearizability}. Secondly, our work is restricted to quantum systems where an execution specifies the location and time at which each message between processes is sent or received. This corresponds to the typical setting of quantum circuits studied in quantum computing, but one could imagine scenarios in which \emph{superpositions} over different message histories are permitted. It has been speculated that causality may behave more exotically in these scenarios~\cite{brukner2014quantum}, and it would be interesting to explore this in the context of a distributed algorithm. Finally, what about quantum relativistic distributed systems? We already know that relativity is relevant for space-based systems, and it is likely that a future quantum internet will have a space-based component, so the quantum relativistic setting could be highly relevant in practice. 

\section{Other Related Work}
\label{sec:related-work}

The most directly related work was discussed in the introduction.
There have been several works on quantum distributed computing in synchronous settings, including works that find computational advantages over classical systems or algorithms for tasks that are not classically possible~\cite{denchev-pandurangan-2008, dufoulon-magniez-pandurangan-podc-2025, balliu-et-al-2025, A97}. 
However, to our knowledge, quantum asynchronous systems have not been studied.

\section{Quantum Physics and Distributed Systems}
\label{sec:quantum-primer}

We give a very brief and incomplete summary of the formalism and notation of quantum mechanics, in the slightly customized form used in this paper. This section is aimed at a reader with a basic knowledge of quantum computing, say from the first few weeks of an undergraduate class on the subject, and can safely be skimmed by quantum specialists.  For a fuller treatment of quantum theory, we recommend standard textbooks~\cite{nielsen2010quantum,kitaev2002classical}.

At any point in time, a quantum system is described by a complex \emph{Hilbert space} $\H$, which for our purposes can essentially always be thought of as a finite-dimensional complex vector space $\mathbb{C}^d$. The special case of $d =2$ is referred to as a \emph{qubit}. In Dirac's bra-ket notation---which is not needed for the body of the paper but which we will use for concrete examples in this section---(column) vectors in $\mathbb{C}^d$ will be denoted by \emph{kets}, e.g. $\ket{\psi}$, and dual vectors (i.e. row vectors) will be denoted by \emph{bras}, e.g. $\bra{\psi}$. 
This notation yields the pleasing expression $\braket{\alpha}{\beta}$ 
for the inner product between two vectors $\ket{\alpha}$ and $\ket{\beta}$. Matrices will be denoted by Latin or Greek letters: a useful way of writing a matrix in components is 
\[ M = \sum_{ij} m_{ij} \ket{i}\bra{j},\]
where $\ket{i}$ is the $i$th standard basis vector and $m_{ij}$ is the entry in the $i$th row and $j$th column of $M$.

The \emph{state} of a quantum system is a Hermitian positive semidefinite linear operator (i.e. a matrix) $\rho$ over $\H$ with trace at most one; such matrices are called (subnormalized) \emph{density matrices}, and the space of density matrices over $\H$ is denoted $D(\H)$. We use the notation $\Bcal(\H)$ to denote the space of bounded linear operators of $\H$, so $D(\H) \subset \Bcal(\H)$.

This formalism is different from the typical definition of quantum states as unit vectors $\ket{\psi}$ in $\H$ that is found in introductory treatments of quantum computing. Such states are called \emph{pure states}, and form a a subset of all possible quantum states: any such state can be captured by the associated rank-one density matrix $\ket{\psi}\bra{\psi}$.
Pure states include determinsitic classical states: for instance, for a qubit system, the qubit can be in the classical states $\ket{0}$ or $\ket{1}$. They also capture \emph{superpositions}, e.g. the state
\[ \ket{+} = \frac{1}{\sqrt{2}} (\ket{0} + \ket{1}). \]

However, the density matrix formalism allows to us to capture quantum states in their full generality: importantly, the state of a subsystem of a larger quantum system can always be described by a density matrix, but not always by a pure state (even if the global state of the entire system is pure). Moreover, a density matrix can describe a \emph{classical probability distribution} over quantum states. In fact our formalism is slightly more general than even the standard notion of density matrices, which requires that $\tr[\rho] = 1$. We allow for states with $\tr[\rho] < 1$ in order to keep track of probabilities of computational histories. In particular, suppose we imagine a system executing a sequence of operations, some of which are measurements with particular outcomes $r_1, r_2, \dots$, ending up in state $\rho$. Then, in our formalism,  $\tr[\rho]$ will be equal to the probability that the measurements performed in this sequence of operations actually produced the sequence $r_1, r_2, \dots$ of outcomes.

If two systems with Hilbert spaces $\H_A, \H_B$ are put together, the state space of the composite system is the \emph{tensor product} $\H_{AB} = \H_A \otimes \H_B$. The space of states $D(\H_{AB})$ over the tensor product space includes matrices that factor as tensor products, i.e. matrices of the form $\rho_A \otimes \rho_B$, but also many others. All states that are not convex combinations of tensor products are called \emph{entangled states}. A canonical example of an entangled state is the \emph{EPR state} on two qubits.
\[ \rho_{EPR} = \ket{EPR}\bra{EPR}, \quad  \ket{EPR} = \frac{1}{\sqrt{2}}(\ket{0}\otimes \ket{0} + \ket{1} \otimes \ket{1}). \]

Going the other way, given a state $\rho_{AB} \in D(\H_A \otimes \H_B)$, we can define the \emph{reduced state} of one of the systems---say, system $A$---by taking the partial trace:
\[ \rho_A = \tr_{B}[\rho_{AB}] = \sum_{i} (I \otimes \bra{i}) \rho_{AB} (I \otimes \ket{i}). \]
(Note that the subscript of $\tr$ is the system to be discarded!)  Here $\ket{i}$ is a standard basis vector in $\H_B$, and $i$ runs over the dimensions of $\H_B$. In our EPR example,
\[ \rho_{A} = \tr_B[ \rho_{EPR}] = \frac{1}{2}( \ket{0}\bra{0} + \ket{1}\bra{1}). \]

The possible operations on quantum systems are \emph{quantum operations}, which are linear maps acting on quantum states. These maps are required to satisfy certain positivity and normalization conditions, which we will specify below in Definition~\ref{def:quantum-operation}, but to start we will explain how these maps act at a basic level. To allow for systems to grow and shrink in size (e.g. by sending or receiving messages, or by initializing fresh qubits or discarding qubits), we allow a quantum operation to change the Hilbert space of the associated system. Thus, if we start with a system corresponding to Hilbert space $\H$, a quantum operation will be described by a linear map $\Lambda: \Bcal(\H) \to \Bcal(\H')$, where after the operation the system is now described by Hilbert space $\H'$: if the system started in state $\rho \in D(\H)$, it will end up in $\rho'= \Lambda(\rho) \in \Bcal(\H')$. A quantum operation can be applied to a subsystem of a composite system by taking its tensor product with the identity map: if we have a composite system $\H_{AB} = \H_A \otimes \H_B$ and an operation $\Lambda: \Bcal(\H_A) \to \Bcal(\H_A')$ on the $A$ system alone, then the operation acts on states of the composite system by
\[ \rho \mapsto (\Lambda \otimes \mathrm{id}) (\rho_{AB}) \in \Bcal(\H_A' \otimes \H_B). \]
Any operation on $AB$ that can be written in this way as a tensor product of an operation on $A$ and the identity on $B$ is referred to as being \emph{local to $A$}, or as acting \emph{trivially on $B$}, and these definitions can be straightforwardly generalized to systems with more components.

Often, treatments distinguish between \emph{measurements}, which act probabilistically and yield a classical outcome visible to classical observers, and other kinds of quantum operations, which act \emph{deterministically} on the state $\rho$ and do not yield a classical outcome. For us, it will be convenient to unify these objects into the single formalism of a quantum operation with  a (possibly empty) \emph{classical outcome set $R$}.

\begin{definition}[Quantum operation]
\label{def:quantum-operation}

A \emph{quantum operation with classical outcome set $R$} is a collection of linear maps $\Lambda^r: D(\H_{in}) \to D(\H_{out})$ for each $r \in R$, satisfying the following conditions: 
\begin{enumerate}
    \item Each $\Lambda^r$ is \emph{completely positive}, meaning that for any ancillary quantum system with Hilbert space $\H'$, and any state $\rho \in D(\H_{in} \otimes \H')$, $(\Lambda \otimes \mathrm{id})(\rho)$ is positive semidefinite.
    \item The linear map $\sum_{r \in R} \Lambda^r$ is \emph{trace-preserving},  meaning that for any $\rho \in D(\H)$, it holds that $\tr[\sum_{r \in R} \Lambda^r(\rho)] = 1$. 
\end{enumerate} 
Physically, when we apply this operation to a system in state $\rho$, a random outcome $r \in R$ is generated with probability $\Pr[r] = \tr[\Lambda^r(\rho)]$, and the system enters state $\Lambda^r(\rho)$.
\end{definition}

As an example of this formalism in action, let us consider the state $\rho_{EPR}$ as defined above, and consider applying a \emph{standard basis measurement} to the first qubit. The standard basis  measurement has outcome set $\{0,1\}$, and corresponds to the maps
\begin{align*}
    \Lambda^0(\rho) &= \ket{0}\bra{0} \cdot \rho \cdot \ket{0}\bra{0} \\
    \Lambda^1(\rho) &= \ket{1}\bra{1} \cdot \rho \cdot \ket{1}\bra{1}.
\end{align*}
Applying this operation to $\rho_{EPR}$ means we should apply $\Lambda^r \otimes \mathrm{id}$. 
\begin{align*}
    (\Lambda^0 \otimes \mathrm{id})(\rho_{EPR}) &= \frac{1}{2} (\ket{0}\bra{0})\otimes (\ket{0}\bra{0}), &\Pr[0] &= \frac{1}{2} \\
     (\Lambda^1 \otimes \mathrm{id})(\rho_{EPR}) &= \frac{1}{2} (\ket{1}\bra{1})\otimes (\ket{1}\bra{1}),  &\Pr[1] &= \frac{1}{2}.
\end{align*}
So we see that this operation yields a uniformly random outcome $r \in \{0,1\}$, and causes \emph{both} qubits to ``collapse'' to the classical state $\ket{r}$.  

\section{Model}
\label{sec:model}

A \emph{distributed system} $\Sys = (\Procs, \Chans)$ consists of a set of asynchronous \emph{processors}, $\Procs$, that communicate by passing \emph{messages}, from a universe $\Msgs$, along \emph{channels} $\Chans$.
The processors are \emph{asynchronous} in the sense that they may spend arbitrary and variable times between computational steps.
Message transmission is also \emph{asynchronous}, meaning that messages may take arbitrarily long to reach their intended recipient, but they eventually arrive.
For convenience, we assume that messages are delivered in order, i.e., that the channels are first-in-first-out (FIFO).
(If channels are not FIFO, senders can simply tag each message along a channel with sequence number and recipients can wait to process any message until all messages of lower sequence numbers along the channel are already received.)

\paragraph{Components and their States}
In general, a system is composed of several \emph{components}, $\Comps$, such as the processors, messages, and channels of a distributed system.
To capture systems of full generality, we model all components $\comp \in \Comps$ of our system to have both a \emph{classical part} $\classical_\comp$ and a \emph{quantum part} $\quantum_\comp$;
thus $\comp = \p{\classical_\comp, \quantum_\comp}$.
(For example, a message $\msg$ whose principal contents are quantum must still contain a classical tag in order for the message to be processed by the recipient without collapsing the quantum state.) 
The \emph{state space} of the classical part is a set of classical states denoted $\CStates_\comp$.
We generally use $\cstate_\comp \in \CStates_\comp$ to refer to the state of the classical part of the component. 
The \emph{state space} of the quantum part is a Hilbert space $\H_\comp$. 
We generally use $\qstate_\comp \in \H_\comp$ to refer to the (local) state of the quantum part of $\comp$.
It is noteworthy that while the classical state of the system is fully specified as the combination of the classical states of its components, due to entanglement, the quantum state of the entire system \emph{cannot} simply be expressed as the combination of quantum states of its components.
The quantum state of the system must be specified \emph{jointly}, as a density matrix of the entire quantum part of the system.
So, $\qstate_\comp$ captures the \emph{local} state of $\quantum_\comp$ and does not capture any entanglements that $\comp$ has with other parts of the system.

\paragraph{Channel States}
A \emph{channel} $\chan$ has a \emph{source} $\proc \in \Procs$ and a \emph{destination} $\proc' \in \Procs$.
We use the notation $\gchan{\proc}{\proc'}$ to refer to the channel whose source is $\proc$ and destination is $\proc'$.
We assume a single channel from every source to every destination, thus \emph{the set of all channels} is $\Chans = \set{\gchan{\proc}{\proc'} \mid \proc,\proc' \in \Procs}$.
Note that we assume a channel from each process to itself. 
At any point in time a channel $\chan = \gchan{\proc}{\proc'}$'s contents are the fixed sequence of messages that been sent by $\proc$ but yet to be received by $\proc'$, in the order they were sent.
Thus, the channel has a length $\chan.\len \in \N$, i.e., the number of messages in the channel, and contents $\chan.\contents = \p{\msg_1,\ldots,\msg_{\chan.\len}}$.
The classical and quantum parts of the channel are $\classical_\chan \triangleq (\classical_{\msg_1},\ldots,\classical_{\msg_{\chan.\len}})$ and $\quantum_\chan \triangleq (\quantum_{\msg_1},\ldots,\quantum_{\msg_{\chan.\len}})$, respectively.
Thus, the \emph{channel's classical state} is simply the sequence of classical states of the messages in the channel: $\cstate_\chan = \p{\cstate_{\msg_1},\ldots,\cstate_{\msg_{\chan.\len}}}$, and the \emph{channel's Hilbert space} is the tensor product of the message Hilbert spaces, i.e., $\H_\chan = \bigotimes_{\msg \in \chan.\contents} \H_\msg$.

\paragraph{System State}
The distributed system $\Sys$ is comprised of its processes $\Procs$ and channels $\Chans$.
So, the \emph{classical state} of the system is $\cstate_{\Sys} = \p{\cstate_{\comp}}_{\comp \in \Procs \cup \Chans}$, and the \emph{system's Hilbert space} is $\Hcal_{\Scal} = \bigotimes_{\comp \in \Procs \cup \Chans} \H_\comp$.
Consequently, the system's classical state is $\cstate_\Sys = \p{\cstate_\comp}_{\comp \in \Procs \cup \Chans}$.
The \emph{system's quantum state} is expressed as a single \emph{density matrix} $\rho_{\Scal}$ over $\Hcal_{\Scal}$.
While the global quantum state of the system is not a simple combination of local quantum states of the components of the system (just as a joint distribution cannot be fully described by its marginals), the \emph{local quantum states} of each component $\comp$ is well defined, and can be calculated from $\qstate_\Sys$ by taking a partial trace over the complement of the component with respect to the system.
That is, for a component $\comp$, its quantum state is $\qstate_\comp = \tr_{\overline{\comp}}[\qstate_\Sys]$).
The \emph{state} of the system is $\fstate_\Sys \triangleq (\cstate_\Sys, \qstate_\Sys)$.

\subsection{Operations, Algorithms, Transition Systems, and Executions}

\begin{definition}[operations]
An \emph{operation} $\op_\comp$ is expressed as a mapping $\cstate_\comp \mapsto (\cop_\comp[\cstate_\comp], \qop_\comp[\cstate_\comp])$, where the first component is a quantum operation that is to be applied to $\quantum_\comp$ to change the quantum state and obtain a measurement outcome $r_\comp$ (which may be $\bot$ if no measurement was performed), and the second component is a classical operation that looks at the measurement outcome to determine a new classical state of $\classical_\comp$, i.e., $\cop_\comp(\cstate_\comp, r_\comp) \mapsto \cstate'_\comp$.

An operation $\op_\comp$ is \emph{local} to $\comp$ since it directly touches only $\classical_\comp$ and $\quantum_\comp$ but acts as identity on the complement of $\comp$. Generally, it is only physically permissible for a processor $\proc$ to affect a system by applying operations that are local to $\proc$. We typically refer to operations that act only on a single $\proc$ and its adjacent channel registers as ``local'', and expect that a physically permissible distributed algorithm should consist only of such local operations. In contrast, we use the term \emph{global operation} to emphasize when an operation may not be local to any processor. These operations will be used in specifications, but will not appear in physically permissible algorithms.

\end{definition}

\paragraph{Events.}
The definition of an event in a quantum system requires some thought, as quantum operations involving measurement are inherently probabilistic and produce random outcomes. In this work, we take the point of view that an event is something that \emph{definitely happens}, and the sequence of what events occurs should constitute the total public information available to the adversary. Thus, when an operation corresponding to a measurement is applied, the corresponding event will specify which outcome is obtained. As an example, an operation could be to measure a qubit in the standard basis, while an event could be to measure that qubit in the standard basis \emph{and obtain} the outcome $1$. With this in mind, we formally define the various types of events that may occur in the following definitions.
\begin{definition}[operation application]
    Suppose we apply operation $\op_\comp$ to a component whose classical state is $\classical_\comp$, and obtain the measurement outcome $r_\comp$. This constitutes an \emph{application event} $e_\comp = \Lambda_\comp^{r_\comp}$, where $\Lambda_\comp^{r_\comp}$ corresponds to applying $\Lambda_\comp[\classical_\comp]$ and obtaining result $r_\comp$, as defined in Definition~\ref{def:quantum-operation}. 
\end{definition}

\begin{definition}[message sending]
    We denote the event of $\proc$ sending the message $\msg$ to $\proc'$ by $\send(\proc, \msg, \proc')$.
    We call $\proc$ the \emph{sender} and $\proc'$ the \emph{intended recipient}.
    The quantum part $\quantum_\msg$ of the message to be sent must be a tensor factor of $\quantum_\proc$, i.e., $\quantum_\proc = \quantum_a \otimes \quantum_\msg$. The action of sending consists of relabeling this tensor factor $\quantum_\msg$ so that it is part of $\chan$ rather than $\proc$, \emph{without} changing the global quantum state\footnote{One could imagine simply physically transporting the qubits corresponding to the $\quantum_\msg$ part of $\proc$ out of the processor.}, and setting the classical part $\classical_\msg$ to some value $\sigma_\msg$.
    Letting the contents of the channel $\chan = \gchan{\proc}{\proc'}$ be $\chan.\contents = (\msg_1,\ldots,\msg_k)$, the sending effects the following change:
    $\quantum_\proc \gets \quantum_a$, $\classical_\msg \gets \sigma_\msg$, and $\chan.\contents \gets (\msg_1,\ldots,\msg_k, \msg)$.
\end{definition}

\begin{definition}[message reception]
    We denote the event of $\proc$ receiving the message $\msg$ from $\proc'$ by $\receive(\proc, \msg, \proc')$. We call $\proc$ the \emph{recipient} and $\proc'$ the \emph{sender}. 
    The action of reception consists of popping the first message $\msg$ from $\chan = \gchan{\proc}{\proc'}$, and delivering the classical and quantum parts of $\msg$ to $\proc$. The delivery of the quantum part of $\msg$ consists of relabeling the tensor factor $\quantum_\msg$ so that it becomes part of $\quantum_\proc$, without changing the global quantum state.
    Thus, letting the contents $\chan$ before the reception be $\chan.\contents = (\msg_1,\ldots,\msg_k)$, the reception changes $\chan$'s contents to $\chan.\contents \gets (\msg_2,\ldots,\msg_k)$ and the quantum part of $\proc$ to $\quantum_\proc \gets \quantum_\proc \otimes \quantum_{\msg_1}$, and records $\classical_\mu$ in $\classical_\pi$. 
\end{definition}

We model the observable interactions of the distributed system with the outside world through an interface of \emph{invocations} and corresponding \emph{responses}.

\begin{definition}[invocation and response]
The \emph{invocation} of a an operation $\op$ on process $\proc$ is modeled by the event $\invoke(\proc, \op)$.
Upon receiving an invocation, a process may eventually \emph{respond} to the invocation with a response $\res$, modeled by the event $\respond(\proc, \res)$.
\end{definition}

\begin{definition}[events]
The \emph{events} of a system are the significant happenings in the distributed system.
An event can correspond to either (1) an invocation, (2) a response, (3) the application of an operation, (4) the sending of a message, or (5) the reception of a message.
Each event $e$ is \emph{labeled} by the component $\comp$ (which is a processor in physically permissible executions) at which it occurs, denoted $e.\lab = \comp$.
\end{definition}

\begin{definition}[step]
    A \emph{step} is a triple $(\fstate, e, \fstate')$, where $\fstate$ is a \emph{pre-event state} $\fstate$, and $\fstate'$ is the \emph{post-event state} produced by the occurrence of $e$ in state $\fstate$. A step is \emph{valid} if the post-event state $\fstate'$ is the correct state that would result from $e$ occurring at $\fstate$.
\end{definition}

\begin{definition}[transition system]
    A \emph{transition system} $(\fstate^0, \delta)$ is defined by an \emph{initial state} $\fstate^0$, and a \emph{transition function},
    $\delta: \CStates \times \Events \times \CStates \to \set{\true, \false}$. We interpret $\delta(\cstate, e, \cstate') = \true$ to mean that it is possible in a system whose classical state is $\cstate$ for event $e$ to occur, transitioning the classical state of the system to $\cstate'$, as well as modifying the quantum state of the system. Abusing notation, we sometimes think of $\delta$ as a predicate on steps, i.e. a map $\FStates \times \Events \times \FStates \to \{\true,\false\}$, where the predicate is true if $\delta$ allows the transition between classical states, and the quantum states evolve according to $e$.
\end{definition}
We remark that a transition system is defined this way in terms of the \emph{classical} states to model the fact that in a distributed system, the decision of what event should happen next is only allowed to depend on classical information. In particular, the quantum state cannot be accessed except through measurement. (This is related to the fact that we do not allow for coherent superpositions of execution histories.)

\begin{definition}[execution]
    An \emph{execution} is a triple $X = (\fstate^0, E, <_X)$, where $\fstate^0$ is the \emph{initial state}, $E$ is a set of events, and $<_X$ is a total order over the events of the system that sequences the events into $\vec E = e^1,e^2,\ldots$. 
    The execution is \emph{finite} or \emph{infinite} depending on the cardinality of $E$.
    Given an execution, we can uniquely define, inductively, the states $\fstate^i$, such that $\p{\fstate^{i-1}, e^i, \fstate^i}$ are valid steps.
    We call the sequence $\vec \fstate = \fstate^0,\fstate^1,\ldots$ the \emph{sequence of states} of $X$.
    We often use $\Psi^\X$ to refer to the final state of a finite execution. An execution is \emph{well formed} if messages satisfy at most-once semantics and FIFO, and if each event is compatible with the state just before it takes place.
   For a given transition function $\delta$, an execution $X$ is \emph{an execution of $\delta$} if each step $(\fstate^{i-1}, e^i, \fstate^i)$ is \emph{$\delta$-valid}, in the sense that $\delta(\fstate^{i-1}, e^i, \fstate^i) = \true$.
   
\end{definition}

Since the ordering $<_X$ in an execution is a total order, an equivalent way to specify an execution is simply by an initial state and a sequence of events
\[ X = (\fstate^0, \vec{E}).\]
We will occasionally use this more compact notation where convenient. It is especially useful for discussing fragments of executions and their concatenations.

\begin{definition}
    An \emph{execution fragment} $X$ is an execution that is viewed as a contiguous subsequence of some larger execution $Y$. Specifically, if
    \[ Y = (\fstate^0, (e^1, \dots, e^n))\]
    is an execution, then for any $1 \leq i\leq j \leq n$, we can define an associated fragment
    \[ X_{i:j} = (\fstate^{i-1}, (e^i, \dots e^j)).\]
    The \emph{concatenation} of two execution fragments $X_1, X_2$ is defined in the natural way and denoted using the notation $X_1 \cc X_2$, and is only well defined if the final state of the first fragment is equal to the initial state of the second fragment. Thus, the execution $Y$ could be written as
    \[ Y = X_{1:i} \cc X_{i+1:n}\]
    for any $i \in [n]$.
\end{definition}

\begin{definition}[algorithm]
A \emph{distributed algorithm} $\A$ is a collection of algorithms for each individual processor $\set{\A_\proc}_{\proc \in \Procs}$.
Each $\A_\proc$ specifies:
\begin{itemize}
    \item 
    a set of possible \emph{initial states} $\FStates^0$
    \item
    A \emph{local transition predicate} $\delta_\proc: \CStates \times \Events \times \CStates \to \{\true, \false\}$  such that $\delta_\proc$ is a function only of the classical states of the components local to $\proc$ (i.e. $\proc$ and any adjacent channels).
\end{itemize}
The associated transition system $\delta$ with $\Acal$ is simply the predicate 
\[ \delta = \bigvee_{\proc \in \Procs} \delta_\proc. \]
\end{definition}

\section{Computational Causality}
\label{sec:computational-causality}

For classical distributed systems, Lamport defined the precedence relation of \emph{computational causality} between events of a distributed system \cite{LamportTimeClocksOrderingOfEvents}.
This relation is defined as the transitive closure of the relation that (1) totally orders events at a given processor chronologically, and (2) orders the event of message reception after the corresponding send event.
Lamport showed that since all interactions between different processors of a distributed system are via message transmissions, chronological orderings between events that are unrelated by the computational causal order are essentially spurious.
In quantum distributed systems, due to entanglement, interactions between different processors can be instantaneous even in an asynchronous distributed system.
In particular, these quantum interactions are transmitted faster than the transmission speeds of the message passing channels.
This puts into question the significance of the computational causality relation in quantum distributed systems.

In this section, we define the computational causality relation for quantum distributed systems and prove that even as quantum interactions affect processors and in-flight messages at rates faster than the speeds of channel transmission, the essential equivalence of executions that share a causal skeleton is retained in quantum distributed systems. 

The goal of the computational causality definition below is to capture only the relevant precedence relationships between steps in an execution.

\begin{definition}[Computational causality]\label{def:causality}
    Let $X = (\fstate^0, E, <_X)$ be an execution.
    We can specify the computational causal relation $\cprec_X$ as the transitive closure of the relation containing the following \emph{primitive relationships} between events $e, e' \in E$:
    \begin{itemize}
        \item[A1:] 
        If $e$ and $e'$ are consecutive events on the same process with $e <_X e'$, then $e \cprec_X e'$. 
        \item[A2:]
        If $e $ and $e'$ are, respectively, the sending and reception of the same message, then $e \cprec_X e'$.
    \end{itemize}
    We call $\cprec_X$ the \emph{causal precedence relation} of execution $X$.
\end{definition}
Notice that if $e \cprec_X e'$, then it always holds that $e <_X e'$: this is obvious in the first case, and in the second case, a message can only be received after it is sent.

\begin{definition}[equicausal executions]
    Let $X = (\fstate^0, E, <_X)$ and $\Xprime = (\fstate^0, E, <_\Xprime)$ be two executions that share an initial state and set of events (but may be ordered differently).
    We say that $X$ and $\Xprime$ are \emph{equicausal}, denoted $\X \ceq \Xprime$, if they share a causal precedence relation, i.e., $\cprec_\X = \cprec_\Xprime$.    
\end{definition}

\begin{lemma}\label{lem:substitution}
    Let $\X = \X_- \cc \X_0 \cc \X_+$ be an execution with causal precedence relation $\cprec_\X$, consisting of three fragments $\X_-, \X_0, \X_+$. Let $\Y_0$ be an execution fragment that is equicausal with $\X_0$ and which has the same final state. Then $\Y = \X_- \cc \Y_0 \cc \X_+$ is equicausal with $\X$ and has the same final state as $\X$. Moreover, if $\X, \Y_0$ are execution fragments of a transition system $\delta$, then so is $\Y$.
\end{lemma}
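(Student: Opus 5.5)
We prove the three claims of Lemma~\ref{lem:substitution} in turn: that $\Y$ is a well-defined execution with the same final state as $\X$, that it is equicausal with $\X$, and that it is an execution of $\delta$ when $\X$ and $\Y_0$ are.

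\emph{Well-definedness and final state.} Since $\Y_0$ is equicausal with $\X_0$, it has the same initial state and the same event set as $\X_0$. By hypothesis it also has the same final state. So the concatenation $\X_- \cc \Y_0$ is defined, because the final state of $\X_-$ equals the initial state of $\X_0$, which is the initial state of $\Y_0$. Likewise $\Y_0 \cc \X_+$ is defined, because the final state of $\Y_0$ equals the final state of $\X_0$, which is the initial state of $\X_+$. The states along $\X_+$ are determined inductively from its initial state and its events, so they are identical in $\X$ and $\Y$. In particular the final states agree. If $\X_+$ is empty, the final state is that of $\Y_0$, which equals that of $\X_0$.

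\emph{Equicausality.} Both executions have event set $E_- \cup E_0 \cup E_+$. I show that the primitive relations (A1 and A2) generate the same transitive closure in $\X$ and $\Y$.

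For A2, the pairing of a send with its reception depends only on which message each event concerns, not on the order. So the A2 pairs coincide.

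For A1, restrict attention to a single process $\proc$. The order $<_\X$ restricted to $\proc$'s events concatenates $\proc$'s events in $\X_-$, then in $\X_0$, then in $\X_+$; the same holds for $\Y$ with $\Y_0$ in the middle. Inside the middle block, I claim $\proc$'s events appear in the same order in $\X_0$ and $\Y_0$. The reason is that $\cprec_{\X_0}$ totally orders each process's events, being the transitive closure of A1 chains, and $\cprec_{\X_0}=\cprec_{\Y_0}$; also $\cprec$ implies $<$, so the local orders agree. Hence the per-process sequences are identical, and the A1 pairs coincide, including the consecutive pairs that straddle a fragment boundary.

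Since the primitive relations are identical, so are their closures, and $\cprec_\X=\cprec_\Y$.

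\emph{Validity under $\delta$.} Each step of $\Y$ is a step of $\X_-$, $\Y_0$, or $\X_+$, with exactly the same pre-event and post-event states, because we checked above that the boundary states match. Validity is a property of individual steps, so every step of $\Y$ is $\delta$-valid. The same step-by-step argument shows well-formedness. Validity and well-formedness of events are local conditions on the pre-event state, so they transfer directly. FIFO and at-most-once hold because message sends and receptions per channel keep the same relative order: each send or receive on a channel is an event of one fixed process, whose local order we showed is unchanged.

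The only point needing care is the A1 step. One must argue that equicausality of the fragments forces identical per-process orders inside the fragment, and that the consecutive pairs straddling fragment boundaries are unaffected by the substitution.
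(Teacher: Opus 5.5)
Your proof is correct and follows essentially the same route as the paper. Matching boundary states give the same final state; identical per-process orders plus the send/receive pairing give identical primitive relations, hence $\cprec_\X = \cprec_\Y$; and $\delta$-validity is checked step by step on the three fragments. You are more explicit than the paper in justifying why the per-process orders inside $\Y_0$ agree with those in $\X_0$: $\cprec_{\X_0} = \cprec_{\Y_0}$ totally orders each process's events, and $\cprec$ implies $<$.
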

\begin{proof}
    It is clear that $Y$ has the same final state as $\X$, since by assumption $\Y_0$ has the same final state as $\X_0$, and this will feed into $\X_+$ in both executions to yield the same final state.

    Next, for the ``moreover'': if $\X$ and $\Y_0$ are both execution fragments of $\delta$, then this means that $\X_{-}, \Y_0$, and $\X_{+}$ are all execution fragments of $\delta$, and thus $\Y$ is as well, since it's the concatenation of these execution fragments.

    Finally, we show the equivalence, $\cprec_X = \cprec_Y$.
    The relative order of events at any given processor is identical in $<_\X$ and $<_\Y$, since the reordering $\Y_0$ of $\X_0$ respects $\cprec_\X$.
    For each message $m$ that is sent at event $e$ and received at event $e'$, $e <_\Y e'$ since:
    (1) if $e$ and $e'$ occur in different fragments or they both occur in the same fragment $\X_-$ or $\X_+$, then this follows since $e <_X e'$, and
    (2) if $e$ and $e'$ both occur in $\X_0$, then this follows since $\Y_0$ is equicausal to $\X_0$. 
    So, by A1 and A2, $\cprec_X \subseteq \cprec_Y$.
    The proof that $\cprec_Y \subseteq \cprec_X$ is symmetric.
\end{proof}

\begin{fact}\label{fact:commuting-quantum-operations}
    Let $\Hcal = \Hcal_{A} \otimes \Hcal_{B} \otimes \Hcal_{C}$. Suppose $\Lambda^{r_A}_A$ is a quantum operation on $\Hcal$ acting nontrivially only on the $A$ system and $\Lambda^{r_B}_B$ is a quantum operation on $\Hcal$ acting nontrivially only on the $B$ system. Then $\Lambda^{r_A}_A \circ \Lambda^{r_B}_B = \Lambda^{r_B}_B \circ \Lambda^{r_A}_A$.
\end{fact}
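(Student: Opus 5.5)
The plan is to reduce the claim to the elementary identity $(\Phi\otimes\mathrm{id})\circ(\mathrm{id}\otimes\Psi) = \Phi\otimes\Psi = (\mathrm{id}\otimes\Psi)\circ(\Phi\otimes\mathrm{id})$ for linear maps on a tensor product. Here the reading of ``acting nontrivially only on the $A$ system'' is that $\Lambda^{r_A}_A = \lambda_A \otimes \mathrm{id}_B \otimes \mathrm{id}_C$ for some linear map $\lambda_A$ on $\Bcal(\Hcal_A)$. Similarly, $\Lambda^{r_B}_B = \mathrm{id}_A \otimes \lambda_B \otimes \mathrm{id}_C$. If the operations are allowed to change the Hilbert space, as in the paper's formalism ($\lambda_A:\Bcal(\Hcal_A)\to\Bcal(\Hcal_A')$), the identity should be read with the appropriate output spaces. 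Each composite then maps $\Bcal(\Hcal_A\otimes\Hcal_B\otimes\Hcal_C)$ to $\Bcal(\Hcal_A'\otimes\Hcal_B'\otimes\Hcal_C)$. Since both sides are linear, it suffices to check equality on a spanning set.

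\textbf{Step 1.} I will use that $\Bcal(\Hcal_A\otimes\Hcal_B\otimes\Hcal_C) = \Bcal(\Hcal_A)\otimes\Bcal(\Hcal_B)\otimes\Bcal(\Hcal_C)$ in finite dimensions. Concretely, the product operators $X_A\otimes X_B\otimes X_C$ span the whole operator space; for instance, the matrix units $\ket{i}\bra{j}\otimes\ket{k}\bra{l}\otimes\ket{m}\bra{n}$ form a basis.

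\textbf{Step 2.} On a product operator, I compute $\Lambda^{r_A}_A(\Lambda^{r_B}_B(X_A\otimes X_B\otimes X_C)) = \Lambda^{r_A}_A(X_A\otimes \lambda_B(X_B)\otimes X_C) = \lambda_A(X_A)\otimes\lambda_B(X_B)\otimes X_C$. The same computation in the other order gives the identical result.

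\textbf{Step 3.} Linearity of both compositions then extends the equality from product operators to all operators, and in particular to all density matrices (subnormalized or not).

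The only real obstacle is interpretive rather than technical. One must make precise that an operation ``acting nontrivially only on $A$'' means exactly a tensor product with the identity, as in the paper's definition of a local operation. One must also make sure the intermediate spaces match when dimensions change; that is, the $B$-operation must be defined on inputs whose $A$ factor is either $\Hcal_A$ or $\Hcal_A'$. Since the $B$ factor is $\mathrm{id}_A\otimes\lambda_B\otimes\mathrm{id}_C$ with $\mathrm{id}_A$ taken on whichever $A$ space is present, this is harmless. Complete positivity and outcome labels $r_A, r_B$ play no role: the statement concerns each fixed branch map, and the identity holds for arbitrary linear maps.
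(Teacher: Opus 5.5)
Your proposal is correct and is essentially the paper's proof: the paper also writes the two operations as $T_A\otimes\mathrm{id}_B\otimes\mathrm{id}_C$ and $\mathrm{id}_A\otimes T_B\otimes\mathrm{id}_C$ and notes that either composition equals $T_A\otimes T_B\otimes\mathrm{id}_C$. Your check on product operators followed by linear extension spells out that last equality, and your remark about dimension-changing maps is a harmless refinement the paper leaves implicit.
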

\begin{proof}
    If $\Lambda^{r_A}_A$ acts nontrivially only on the $A$ system, then it can be written as
    \[ \Lambda^{r_A}_A = T^{r_A}_A \otimes \mathrm{id}_B \otimes \mathrm{id}_C,\]
    where $T^{r_A}_A$ is a quantum operation on $\Hcal_A$. Likewise,
    \[ \Lambda^{r_B}_B = \mathrm{id}_A \otimes T^{r_B}_B \otimes \mathrm{id}_C,\]
    where $T^{r_B}_B$ is a quantum operation on $\Hcal_B$. Thus,
    \[\Lambda^{r_A}_A \circ \Lambda^{r_B}_B = \Lambda^{r_B}_B \circ \Lambda^{r_A}_A = T^{r_A}_A \otimes T^{r_B}_B \otimes \mathrm{id}_C. \]    
\end{proof}

\begin{lemma}\label{lem:inversion-swap}
    Let $\X = \p{\fstate, \set{e_1, e_2}, \set{e_1 <_\X e_2}}$ be an execution fragment of some algorithm $\Acal$, with $e_1 <_\X e_2$. Suppose $e_1 \ncprec_X e_2$, and consider the reordered execution $\Xprime = \p{\fstate, \set{e_1, e_2}, \set{e_2 <_\Xprime e_1}}$, in in which the two events have been swapped. Then $\Xprime$ is a valid execution fragment of $\Acal$, $\Xprime \ceq X$, and $\Xprime$ and has the same final state as $\X$.
\end{lemma}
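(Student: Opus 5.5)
Since $e_1 \ncprec_\X e_2$ and the fragment contains only these two events, neither primitive relationship A1 nor A2 can hold between them. So $e_1$ and $e_2$ are labeled by distinct processes $\proc_1 \neq \proc_2$. Also, $e_2$ is not the reception of a message whose sending is $e_1$. The plan is to show three things in turn: (i) $e_2$ is enabled in $\fstate$ and $e_1$ is enabled after $e_2$, so $\Xprime$ is a valid fragment of $\Acal$; (ii) the final states coincide; (iii) the causal relations coincide.

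For (i) I use that each local predicate $\delta_\proc$ depends only on the classical states of $\proc$ and its adjacent channels. Event $e_1$ changes only the classical state of $\proc_1$ and possibly one channel: the channel it sends into, or the channel it pops from. The only way $e_1$ could affect the enabling of $e_2$ is by altering a channel $e_2$ reads. I split into cases on that channel.
\begin{itemize}
\item If $e_1$ is a send on $\gchan{\proc_1}{\proc_2}$ and $e_2$ is a reception on the same channel, then by FIFO $e_2$ pops the head message. This message is not the one sent by $e_1$, since otherwise A2 would give $e_1 \cprec_\X e_2$. Hence the channel was already nonempty in $\fstate$ with the same head, and $e_2$ is enabled in $\fstate$.
\item Otherwise the two events touch disjoint classical registers. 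The exception is appends and pops on the same channel at opposite ends, which commute as sequence operations when the head is not the appended element.
\end{itemize}
Symmetrically, $e_2$ does not disturb what $e_1$ reads, so $e_1$ remains enabled in the state reached after $e_2$. I expect this case analysis on shared channels, especially reconciling ``adjacent channels'' in the definition of $\delta_\proc$ with FIFO, to be the main obstacle and the only place where real care is needed.

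For (ii), the classical part follows because the two events update disjoint classical registers, or commute on the same channel as just argued. For the quantum part, every event acts on the global density matrix either by a relabeling of tensor factors (send/receive) or by $\qop_\comp^{r}\otimes\mathrm{id}$ on its own component's factor. Invocations and responses act trivially. Sends and receives of distinct messages relabel disjoint factors. Write $\H_\Sys = \H_A\otimes\H_B\otimes\H_C$ with $A$ the factor touched by $e_1$ and $B$ the factor touched by $e_2$; these are disjoint since the processes differ and the messages differ. Fact~\ref{fact:commuting-quantum-operations} then gives that the two maps commute, so the same recorded outcomes yield the same subnormalized final state $\qstate$. The relabelings are trivially commuting bookkeeping.

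For (iii), $\cprec$ on a two-event fragment is generated only by A1 and A2 between $e_1,e_2$. We showed no such relation holds in $\X$. In $\Xprime$ the same argument applies: the events still lie on different processes, and $e_1$ cannot be the reception of a message sent by $e_2$, because the message $e_1$ receives was already in the channel in $\fstate$. Thus both relations are empty and $\Xprime \ceq \X$.
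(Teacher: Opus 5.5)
Your proposal is correct and follows the paper's route. Distinct processors with no send/receive pairing give validity of the swapped order through locality of the transition predicates, both causal relations are empty, and the final states agree by Fact~\ref{fact:commuting-quantum-operations}. Your FIFO case analysis of the channel between $\proc_1$ and $\proc_2$ is more careful than the paper, which simply asserts that the two events act on totally disjoint registers; both arguments still tacitly assume that whether a non-reception event is enabled depends only on the processor's own classical state, not on the contents of its adjacent channels.
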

\begin{proof}
    We are given that $e_1 \ncprec_X e_2$. From this it follows that $e_1$ and $e_2$ must be on different processors, and moreover, $e_1$ and $e_2$ cannot be the sending and reception of the same message. Let $\pi_1$ be the processor on which $e_1$ acts and $\pi_2$ be the processor on which $e_2$ acts. Then the classical state $\cstate_{\pi_2}$ of $\pi_2$ after $e_1$ in execution $X$ is the same as its value in $\fstate$, and likewise the classical state $\cstate_{\pi_1}$ of $\pi_1$ after $e_2$ in execution $\Xprime$ is the same as its value in $\fstate$. Thus, both transitions in $\Xprime$ are valid for $\Acal$, since they occur in processors with the same classical states as the corresponding transitions in $\X$. This establishes that $\Xprime$ is a valid execution fragment of $\Acal$.
    
    Now, to show that $\X \ceq \Xprime$, observe that in the reordered execution $\Xprime$, it still holds that $e_2 \ncprec_\Xprime e_1$, since the two events are on separate processors and not connected by a message. Moreover, since $e_1 <_X e_2$ and $e_2 <_\Xprime e_1$, it holds that $e_2 \ncprec_\X e_1$ and $e_1 \ncprec_\Xprime e_2$. Thus, both $\X$ and $\Xprime$ have an empty causal precedence relation, and thus $\X \ceq \Xprime$.

    Finally, we must show that the final states of $\X$ and $\Xprime$ are equal. By the above, we have shown that $e_1$ and $e_2$ act on totally disjoint classical and quantum registers of the system. If either $e_1$ or $e_2$ is a message send or reception or invocation or response, then the claim is clearly true, since such an event acts trivially on the quantum part of the system, and the classical operations are on different processors and thus commute.  
        
    Thus, it suffices to consider the case of $e_1, e_2$ both applications of operations. Here too, the conclusion will follow directly from the commutativity of \emph{quantum} operations on distinct quantum tensor factors, but we will spell out the details for completeness. Any application of an operation consists of the application of a deterministic transition function to the classical states of the classical registers it touches, together with a quantum operation on the quantum registers it touches. Specifically, recall from above that $\proc_1$ starts in state $\cstate_{\proc_1}$ and $\proc_2$ starts in state $\cstate_{\proc_2}$ in the initial state $\fstate$; let $\cstate_{rest}$ be the classical state of the rest of the system in $\fstate$, and $\rho_0$ be the quantum state of the system in $\fstate$. Then in both $\X$ and $\X'$, the event $e_1$ consist of $\proc_1$ performing operation
    $(\cop_{\proc_1}[\cstate_{\proc_1}], \qop_{\proc_1}[\cstate_{\proc_1}])$, and the event $e_2$ consists of  $\proc_2$ performing operation
    $(\cop_{\proc_2}[\cstate_{\proc_2}], \qop_{\proc_2}[\cstate_{\proc_2}])$.
    In $\X$ and $\Xprime$ the respective final states are
    \begin{align} \fstate_X &= ((\underbrace{\cop_{\proc_1}(\cstate_{\proc_1}, r_1)}_{\text{state of $\proc_1$}}, \underbrace{\cop_{\proc_2}(\cstate_{\proc_2}, r_2)}_{\text{state of $\proc_2$}}, \underbrace{\sigma_{rest}}_{\text{state of rest}}), \underbrace{\qop_{\pi_2}^{r_2} \circ \qop_{\pi_1}^{r_1} (\rho_0)}_{\text{quantum state}} )\\
\fstate_\Xprime &= ((\underbrace{\cop_{\proc_1}(\cstate_{\proc_1}, r_1)}_{\text{state of $\proc_1$}}, \underbrace{\cop_{\proc_2}(\cstate_{\proc_2}, r_2)}_{\text{state of $\proc_2$}}, \underbrace{\sigma_{rest}}_{\text{state of rest}}), \underbrace{\qop_{\pi_1}^{r_1} \circ \qop_{\pi_2}^{r_2} (\rho_0)}_{\text{quantum state}} ).
    \end{align}
    The classical parts of $\fstate_\X$ and $\fstate_\Xprime$ agree by inspection. For the quantum parts, we see from Fact~\ref{fact:commuting-quantum-operations} that the two quantum operations commute with each other, so the quantum states agree as well. Therefore, $\fstate_\X = \fstate_\Xprime$ as claimed.
\end{proof}

\begin{lemma}\label{lem:move-to-end}
Let $\X = (\fstate^0, E, <_\X)$ be a finite execution. 
Let the total order of the events induced by $<_X$ be $\vec{E}_{<_X} = (e^i)_{i \in [k]}$.
Suppose $e^i \in E$ is an event with no successors under $\cprec_\X$, and $e^j \in E$ be an event such that $e^i <_\X e^j$.
Let $<_\Xprime$ be the relation corresponding to the same total order as $<_\X$, except that $e$ is reordered to occur right after $e'$, i.e., $\vec{E}_{<_\Xprime} = (e^1,\ldots,e^{i-1},e^{i+1},\ldots,e^j,e^i,e^{j+1},\ldots,e^k)$.
Let $\Xprime = (\fstate^0, E, <_\Xprime)$. Then, $\X$ and $\Xprime$ are equicausal and have the same final state.
\end{lemma}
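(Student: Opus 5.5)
The plan is to move $e^i$ rightward one position at a time, using Lemma~\ref{lem:inversion-swap} for each adjacent transposition and Lemma~\ref{lem:substitution} to lift each local swap to the whole execution. We induct on $j-i$. If $j = i$ there is nothing to prove. Otherwise, let $\X^{(0)} = \X$, and for $t = 1,\ldots,j-i$ let $\X^{(t)}$ be the execution whose order is that of $\X$ except that $e^i$ has been moved to sit immediately after $e^{i+t}$. Then $\X^{(j-i)} = \Xprime$. It suffices to show that for each $t$, $\X^{(t)}$ is equicausal with $\X^{(t-1)}$ and has the same final state; transitivity of equality of relations and of final states then finishes the proof.

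For the inductive step, write $\X^{(t-1)} = \X_- \cc \X_0 \cc \X_+$, where $\X_0$ is the two-event fragment $(e^i, e^{i+t})$ in which $e^i$ is immediately followed by $e^{i+t}$. We want $e^i \ncprec e^{i+t}$ inside this fragment, since then Lemma~\ref{lem:inversion-swap} yields a swapped fragment $\Y_0$ that is equicausal with $\X_0$, has the same final state, and is valid. Lemma~\ref{lem:substitution} then gives that $\X^{(t)} = \X_- \cc \Y_0 \cc \X_+$ is equicausal with $\X^{(t-1)}$ and has the same final state.

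The main obstacle is verifying this hypothesis. This requires carrying the invariant ``$e^i$ has no successors under $\cprec_{\X^{(t-1)}}$'' through the induction. It holds at $t=1$ by assumption, and it is preserved because $\cprec_{\X^{(t)}} = \cprec_{\X^{(t-1)}}$. Given the invariant, $e^i$ has no primitive A1 or A2 successor in $\X^{(t-1)}$. Consequently $e^{i+t}$ cannot be a later event on the same process as $e^i$, since the next event on that process would be an A1 successor. Nor can $e^{i+t}$ be the reception of a message sent at $e^i$, since that would be an A2 successor.

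The causal relation computed within the two-event fragment $\X_0$ is contained in the one induced from the whole execution. This is the subtle point: A1 in the fragment relates consecutive events on a process within the fragment. If $e^i$ and $e^{i+t}$ were on the same process, $e^{i+t}$ would be a later event on that process, and we have just ruled this out. So $e^i \ncprec_{\X_0} e^{i+t}$, and the swap applies. I expect the bookkeeping of that last containment, namely that A1 relations in a fragment come from chains of A1 relations in the whole execution, to be the only step needing care; everything else is a direct invocation of the two prior lemmas.
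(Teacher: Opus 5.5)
Your proposal is correct and follows essentially the same route as the paper: move $e^i$ rightward one adjacent transposition at a time, apply Lemma~\ref{lem:inversion-swap} to the two-event middle fragment, and lift each swap with Lemma~\ref{lem:substitution}, carrying the invariant that $e^i$ has no $\cprec$-successors via equicausality. Your extra care about the fragment's causal relation makes explicit a step the paper leaves unstated. Since the fragment consists of two adjacent events, its A1 and A2 relations coincide directly with those of the whole execution, so no chain argument is actually needed there.
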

\begin{proof}
    The proof is by induction: we repeatedly apply Lemma~\ref{lem:substitution} with the middle fragment $\X_0$ taken to be $e^i$ and the event immediately after it; we show that swapping $e^i$ forward preserves causality and the final state using Lemma~\ref{lem:inversion-swap}.

    More precisely, let $\X^\ell$ be the execution corresponding to the ordering
    \[\vec{E}_{<_{\X^\ell}} =  (e^1, \dots, e^{i-1}, e^{i+1}, \dots, e^{\ell-1} e^{i}, e^\ell, \dots). \]
    The inductive hypothesis is that $X^\ell$ is equicausal with $X$ and that $\X^\ell$ has the same final state as $\X$. Our base case is $\X^{i+1} = \X$, where both these properties hold.
    
   Now, split the execution $\X^\ell$ as
    \[ X^\ell = X_{-} \cc \underbrace{X_0}_{(e^{i}, e^\ell)} \cc X_{+}. \]
    By hypothesis, $\X^\ell \ceq \X$, so in $\X^\ell$, $e^i$ has no successors under $\cprec_{\X^\ell}$. Therefore, in the middle segment $X_0$, it holds that $e^i \ncprec e^{\ell}$. Therefore, we can apply Lemma~\ref{lem:inversion-swap} to replace $X_0$ with $Y$ corresponding to the order $e^\ell <_\Y e^i$, and then apply Lemma~\ref{lem:substitution} to say that the execution
    \[ X_{-} \cc Y \cc X_{+} \]
    is equicausal with and has the same final state as $\X$. But this is exactly $\X^{\ell+1}$. 
\end{proof}

\begin{definition}[Computational Lightcones]
    Let $\X = (\fstate^0, E, <_\X)$ be an execution. 
    Then for any set of events $D \subseteq E$, its \emph{computational past (resp. future) lightcone} $\Lcal_P(D)$ (resp. $\Lcal_F(D)$) is defined as the set of all events in $E$ that precede (resp. succeed) any event in $D$ under the computational causality order:
    \begin{align}
        \Lcal_P(D) &= \{ e \in E: \exists d \in D, e \cprec_{X} d\} \\
        \Lcal_F(D) &= \{ e \in E: \exists d \in D, d \cprec_{X} e \}.
    \end{align}
\end{definition}

\begin{theorem}[Equicausal executions have equivalent final states]\label{thm:equiv}
    Let $\X$ and $\Xprime$ be finite executions with final states $\fstate^\X$ and $\fstate^\Xprime$, respectively.
    If $\X \ceq \Xprime$, then $\fstate^\X = \fstate^\Xprime$.
\end{theorem}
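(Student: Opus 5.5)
We proceed by induction on the number of events $k = |E|$, transforming $\Xprime$ into $\X$ one event at a time. Each step uses Lemma~\ref{lem:move-to-end} (together with Lemma~\ref{lem:substitution}), and each step preserves both equicausality and the final state. If $k = 0$, both executions consist only of the initial state $\fstate^0$, and the claim is trivial.

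For the inductive step, let $e$ be the last event of $\X$ under $<_\X$. Every $\cprec_\X$-successor $e'$ of $e$ would satisfy $e <_\X e'$, which is impossible, so $e$ has no successors under $\cprec_\X$. Since $\cprec_\X = \cprec_\Xprime$, the event $e$ also has no successors under $\cprec_\Xprime$. Let $e^\ast$ be the last event of $\Xprime$. If $e \neq e^\ast$, then $e <_\Xprime e^\ast$, and Lemma~\ref{lem:move-to-end} lets us move $e$ to the end of $\Xprime$. This produces an execution $\Xprimeprime$ that is equicausal with $\Xprime$, hence with $\X$, and has the same final state as $\Xprime$. Lemma~\ref{lem:move-to-end} is stated for executions, so the reordered sequence is again a valid, well-formed execution: Lemma~\ref{lem:inversion-swap} guarantees validity of each swap.

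Now $\X = \X_{1:k-1} \cc (e)$ and $\Xprimeprime = \Xprimeprime_{1:k-1} \cc (e)$. The prefixes share the initial state $\fstate^0$ and the event set $E \setminus \{e\}$. Their causal relations are the restrictions of $\cprec_\X$ and $\cprec_\Xprimeprime$ to $E\setminus\{e\}$, and these coincide. The reason needs checking: in Definition~\ref{def:causality}, removing a final event $e$ that has no successors deletes only the primitive relations A1 and A2 that end at $e$. Consecutiveness of the remaining events on a processor is unaffected, since $e$ is last on its processor. Any transitive chain among the remaining events cannot pass through $e$, because $e$ has no successors. So the restricted relation is exactly the causality relation of the prefix, and the two prefixes are equicausal. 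By the induction hypothesis they have the same final state $\fstate$.

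Finally, applying the same event $e$ to the same state $\fstate$ yields the same final state, because steps are determined by the pre-event state and the event. Hence $\fstate^\X = \fstate^\Xprimeprime = \fstate^\Xprime$.

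The main subtlety, rather than a genuine obstacle, is justifying that the causality relation of a prefix equals the restriction of the full relation. This is where the argument above uses that $e$ is maximal; the same reasoning also handles the fact that the relations are transitive closures. A secondary point to verify is that the event $e$ truly has the same effect in both executions. This holds because "event" already fixes the operation and the measurement outcome, and the pre-event states are identical.
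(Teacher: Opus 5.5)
Your proof is correct and follows essentially the same route as the paper's: you induct on the number of events, use Lemma~\ref{lem:move-to-end} to move the last event $e$ of $\X$ to the end of $\Xprime$, apply the induction hypothesis to the two equicausal prefixes, and finish by applying $e$ to their common final state. You also justify two points the paper leaves implicit, which tightens the argument: that $e$ has no successors under $\cprec_{\Xprime}$, so the lemma applies, and that each prefix's causality relation is the restriction of the full relation, so the prefixes really are equicausal.
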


\begin{proof}
    The proof is by induction over the number of events in the executions.
    For the base case, consider $\X$ that has zero events.
    By equicausality, the initial states of $\X$ and $\Xprime$ are identical, and since they have no events, this is also the final state.

    The inductive hypothesis is that all equicausal executions with $k$ or fewer events have identical final states.
    For the induction step, consider equicausal $\X = (\fstate^0, E, <_\X)$ and $\Xprime = (\fstate^0, E, <_\Xprime)$ with $k+1$ events.
    Let $e$ be the last event in $\X$.
    Let $<_{\Xprimeprime}$ be $<_\Xprime$, just with $e$ reordered to be the last event, and define $\Xprimeprime = (\fstate^0, E, <_{\Xprimeprime})$.
    By Lemma~\ref{lem:move-to-end}, $\Xprime$ and $\Xprime'$ are equicausal and have the same final state.
    By transitivity of equicausality, $\X$ and $\Xprimeprime$ are also equicausal, and by definition, they share the same last event, $e$.
    Let $\Y$ and $\Yprimeprime$ be the prefixes of $\X$ and $\Xprimeprime$, respectively, that are obtained by removing the last event.
    $\Y$ and $\Yprimeprime$ are equicausal and have only $k$ events, thus they share a final state, $\fstate^\Y$.
    Since, $\X$ and $\Xprimeprime$ are both obtained by event $e$ occurring at $\fstate^Y$, they share a final state $\fstate^\X$.
    Since $\Xprime$ and $\Xprimeprime$ are known to share a final state, $\fstate^\X$ is also the final state of $\Xprimeprime$.    
\end{proof}

\section{Specification of Consistent Quantum Global Operations}

Quantum distributed algorithms can only execute atomic operations that are process-local.
Nevertheless, we would like to implement systems which simulate atomic global operations.
Our algorithm, presented in the next section, will be inspired by the global snapshotting algorithm of Chandy and Lamport~\cite{ChandyLamport}.
Snapshots are among the easiest global operations to define, since they only read the state but do not disturb it.
By contrast, the analogous quantum global operation, i.e, the \emph{quantum snapshot}, must measure the quantum state of the entire system, which alters the state of the underlying system.
Such an alteration of the underlying system state is not unique to quantum operations: even classical global operations can be state altering.
As an example, a security application may wish to atomically encrypt the global state of a system and substitute operations with their homomorphic equivalents.
The specification for global operations that we lay out in this section is applicable both to classical and quantum global operations.

\begin{definition}[Decomposable Global Operations]
    A \emph{decomposable} global operation $G$ is an operation such that the following hold.
    \begin{enumerate}
        \item The quantum portion $\Lambda_G$ has outcomes $\vec{r}$ consisting of a tuple of outcomes indexed by $\Procs \cup \Msgs$, and can be written as a tensor product of operations over the components of the system: 
    \[ \Lambda_G^{\vec{r}}[\cstate] = \left(\bigotimes_{\proc \in \Procs} \Lambda_{G, \proc}^{r_\proc}[\cstate_{\proc}] \right) \otimes \left(\bigotimes_{\msg \in \Msgs} \Lambda_{G, \msg}^{r_{\msg}}[\cstate_{\msg}]\right). \]
    \item The classical portion $\cop^G$ factors into maps on the individual components of the system in a similar way:
    \[ \cop^G(\cstate, \vec{r}) = \left( \BigTimes_{\proc \in \Procs} \cop^G_{\proc}(\cstate_{\proc}, r_{\proc}) \right) \times \left( \BigTimes_{\msg \in \Msgs} \cop^{G}_{\msg}(\cstate_{\msg}, r_{\msg})\right). \]
    \end{enumerate}
   The set of all global operations is denoted $\GOps$.
\end{definition}

Our algorithm, presented in the next section, will implement consistent global operations in the case that:
(1) the global operations invoked come from the set $\Gcal$ of \emph{decomposable} operations, i.e., they can be expressed as independently operating on separable components of the system, and
(2) global operations are not invoked concurrently.
The first condition is met by operations such as snapshots, simultaneous global measurement, global encryption, etc.
The latter condition is easily satisfied by ensuring that a single \emph{leader} process takes charge of all global operations.

Let $\A$ be an algorithm on system $\Sys = (\Procs, \Chans)$, with associated transition function $\delta$.

\begin{definition}[atomic specification]\label{def:atomic-spec}
Let $\A$ be an algorithm on system $\Sys = (\Procs, \Chans)$, with associated transition function $\delta$.
The \emph{atomic specification of $\delta$ with global operations in the set $\GOps$} is the transition function $\deltahat$ that is defined as follows:
\begin{itemize}
\item
The \emph{classical state space} of each process $\proc \in \Procs$ is 
$\CStateshat_\proc = \CStates_\proc \times \XCStates_\proc$, where $\XCStates_\proc = \set{\bot} \cup \set{(op, res) \mid \GOps \times \p{\Res \cup \set{\bot}}}$.
\item
The valid transitions under $\deltahat$ correspond to:
\begin{itemize}
    \item 
    {\bf Base Events:}
    
    $\deltahat((\cstate, \xcstatehat), e, (\cstate', \xcstatehat))$, where $(\cstate, e, \cstate')$ is a $\delta$-step.
    
    \item
    {\bf Invocation of Global Operation:}
    
    $\deltahat((\cstate, \xcstatehat), e, (\cstate, \xcstatehat'))$, where 
    $e$ is the invocation of a global operation on a process $\proc$ whose $\xcstatehat_\proc = \bot$ and $\xcstatehat'$ is defined by $\xcstatehat'_{\ol\proc} = \xcstatehat_{\ol\proc}$ for each $\ol\proc \ne \proc$, and $\xcstatehat'_{\proc} = (\gop, \bot)$. 

    \item
    {\bf Atomic Execution of Global Operation:}
    
    $\deltahat((\cstate, \xcstatehat), e, (\cstate', \xcstatehat'))$, where 
    $e$ is the atomic execution of the pending global operation $G$ that is recorded in the state  $\xcstatehat_\proc \ne \bot$ of some leader process $\proc$ (for which $\xcstatehat_\proc = (\gop, \bot)$).
    The final state of this transition is as follows:
    \begin{itemize}
        \item For each processor and message, the associated state $\cstate'$ is the result of the classical portion of the global operation $G$ applied to that component.
        \item For each processor $\proc$, $\xcstatehat_{\proc}' = (G, R_{\proc})$ where $R_{\proc}$ records the outcomes of the global operation $G$ on that processor $\proc$ and on all messages that are currently in flight to $\proc$ but have not been received.
    \end{itemize}
    \item
    {\bf Response to a Global Operation:}
    
    $\deltahat((\cstate, \xcstatehat), e, (\cstate, \xcstatehat'))$, where 
    $e$ is the event that some processor $\proc$ responds with $R_{\proc}$, $\xcstatehat_{\proc} = (G, R_{\proc})$ contains a record of $R_{\proc}$, and $\xcstatehat_{\proc}'$ is reset to $\bot$. For all other processors $\proc' \neq \proc$, $\xcstatehat'_{\proc'} = \xcstatehat_{\proc}$.

\end{itemize}
\end{itemize}
\end{definition}

\begin{remark}
The definition above is general, allowing processes to invoke global operations concurrently.
However, like Chandy and Lamport's snapshot algorithm, in this paper, we will only be interested in algorithms that \emph{do not} allow for concurrent invocations of global operations (even by different processes).
\end{remark}

\section{Quantum Global Operations Algorithm}

The Quantum Global Operation (QGO) Algorithm $\At$ is defined in terms of a base algorithm $\A$, which can be any quantum distributed algorithm, and a decomposable global operation $G \in \GOps$. The algorithm is closely modeled on the Chandy-Lamport algorithm~\cite{ChandyLamport}---in fact, all the steps are essentially the same.

$\At$ consists of $\A$ augmented in the following ways. Firstly, each component (processor or message) gets an additional classical register $\xcstatet$, which will be used by the algorithm to store information related to the global operation. Specifically, for a processor $\proc$, $\xcstatet_{\proc}.op$ will store the global operation currently underway, or $\emptyset$ if no operation is underway;  $\xcstatet_{\proc}.res$ contains a response that is under construction; and $\xcstatet_{\proc}.waitset$ contains a list of channels that the processor $\proc$ is waiting to hear from. For each message, $\xcstatet_{\msg}$ will be used to store a \emph{marker} indicating the global operation $G$ to be performed, and also to temporarily store the outcome of the global operation on the message before it is transferred to $\xcstatet_{\proc}.res$ register of the recipient $\proc$. 

$\At$ adds to $\A$ three procedures, which are described in pseudocode in Algorithm~\ref{alg:quantum-global-operation-algorithm}. 
Before describing them in detail, we sketch the action of the algorithm at a high level. 
When a new global operation $G$ is invoked on a process $\proc$, it applies $G.\proc$ to its own state, records the outcome, and starts an empty record for every incoming channel: these records are meant to hold messages that are currently ``in flight'', and $\proc$ will apply the global operation $G.\msg$ to any message $\msg$ received hereafter and record the results, until the stopping condition (to be described) is achieved. 
Process $\proc$ then broadcasts a ``marker message'' to all other processes. Each process that receives a marker behaves exactly as the $\proc$ did, when it receives the marker for the first time. 
Whenever a process (including the process on which the global operation was invoked) receives subsequent markers, it marks the record of the corresponding channel as complete and stops recording further messages on that channel. Finally, whenever a process has marked all channels as complete, it responds with its complete record, consisting of the outcome it obtained from its own state as well as all recorded message outcomes. For simplicity, we allow each process to respond separately, and do not require the algorithm to aggregate the responses---the algorithm can easily be modified to do this by having processes send their responses back to the invoking process.

Now, in more detail, we describe the three added procedures in $\At$.
    \begin{itemize}
    \item   $\At.\Invoke$: this takes as an argument a description of a global operation $G$, and is executed by the leader process $\proc$ when it receives an invocation of the QGO algorithm on $G$. This initiates the global operation by calling $\At.\ProcessNewGlobalOp$ on $\proc$, with the second argument set to $\bot$ to indicate that this is an invocation received from outside the system and not caused by a message from another processor in the system.
    \item $\At.\ProcessNewGlobalOp$: this takes two arguments $G, \chan$ indicating the global operation and the channel on which the marker was received (if any). This perform the steps described above: the processor's local state is acted on with $G$, and the result is recorded. Records are then opened for each incoming channel by initializing an empty record in $\xcstatet_{\proc}.res$ and adding the channel to $\xcstatet_{\proc}.waitset$, and finally, the processor sends out a marker, i.e. a message with $\xcstatet_{\msg} = G$, to all other processors.
    
    \item $\At.\Receive$: this takes one argument $\msg$, which is a message that has been received. We imagine this procedure as ``intercepting'' all message receptions before they reach the base algorithm $\A$. Upon receipt of the message, the processor decides if it is a marker message, by checking whether $\xcstatet_{\msg}$ is nonempty. If so, and if this is the first time the marker is being received, then the processor calls $\At.\ProcessNewGlobalOp$ on itself. Otherwise, the processor marks the channel $\chan$ on which this marker was received as closed, by removing it from $\xcstatet_{\proc}.waitset$, and if all the waitsets are empty, it sends out its response.

    If this is not a marker message, but it is received on a channel $\chan$ that the processor is waiting for, then the global operation is applied to the message and the results are recorded in $\xcstatet_{\proc}.res.\chan$. 
    
    Finally, in all cases, the message is passed on to the base algorithm $\A$.
    \end{itemize}

    Finally, we remark on the relation of this algorithm to the events of the system. We assume that each procedure in $\At$ is executed \emph{atomically}, meaning that no other events can occur on the processor $\proc$ while it is executing $\Invoke$, $\ProcessNewGlobalOp$, or $\Receive$. Each procedure execution can correspond to multiple events, and the line numbers in the pseudocode in Algorithm~\ref{alg:quantum-global-operation-algorithm} indicate where a new event starts.

\begin{algorithm}[h]

\begin{algorithmic}[1]

\State $\At.\Invoke_\pi(G)$:
\Statex\tab[2] $\ProcessNewGlobalOp_{\proc}(G, \bot)$

\bigskip

\State $\At.\ProcessNewGlobalOp_\proc(G, \chan)$ 
\Statex\tab[2]       $\xcstatet_\proc.\op = G$
\State\tab[2]        $\xcstatet_\proc.\resp.\proc \gets \Execute_\proc(G.\proc)$
\Statex\tab[2]       $\xcstatet_\proc.\waitset \gets \Chans_{* \to \proc}$
\Statex\tab[2]       \cmdif $\chan \ne \bot$ \cmdthen 
\Statex\tab[3]          $\xcstatet_\proc.\resp.\chan \gets \varnothing$
\Statex\tab[3]          $\xcstatet_\proc.\waitset \gets \xcstatet_\proc.\waitset \setminus \set{\chan}$
\Statex\tab[2]       $\msg \gets \NewMessage[\xcstatet_\msg = G]$ 
\State\tab[2]        $\Broadcast_{\Chans_{\proc \to *}}(\mu)$

\bigskip

\State $\At.\Receive_{\pi,\chan}(\msg)$
\Statex\tab[2] \cmdif $\xcstatet_\mu \ne \varnothing$ \cmdthen
\Statex\tab[3]   \cmdif $\xcstatet_\proc = \varnothing$ \cmdthen \Comment{New global operation} 
\Statex\tab[4]      $\ProcessNewGlobalOp_{\proc}(\xcstatet_\msg, \chan)$
\Statex\tab[3]   \cmdelse \Comment{Ongoing global operation} 
\Statex\tab[4]       $\xcstatet_\proc.\waitset \gets \p{\xcstatet_\proc.\waitset \setminus \set{\chan}}$
\Statex\tab[4]       \cmdif $\tau_\pi.\waitset = \nullset$ \cmdthen 
\Statex\tab[5]          $\Respond(\xcstatehat_\proc.\resp)$
\Statex\tab[5]          $\xcstatehat_\proc.\resp \gets \varnothing$ 
\Statex\tab[5]          $\xcstatet_\proc \gets \varnothing$
\Statex\tab[2] \cmdelse \Comment{This is a regular message, not a marker}
\Statex\tab[3]   \cmdif $\xcstatet_\proc \ne \varnothing$ \cmdand $\chan \in \xcstatet_\proc.\waitset$ \cmdthen \Comment{If a global operation is underway}
\State\tab[4]       $\xcstatet_\msg \gets \Execute_{\msg}(\xcstatet_\proc.\op.\msg)$
\State\tab[4]       $\p{\xcstatehat_\proc.\resp.\chan}.\Append(\xcstatet_\msg)$
\Statex\tab[4]       $\xcstatet_\msg \gets \varnothing$
\Statex\tab[3]       $\A.\Receive_{\proc, \chan}(\msg)$ \Comment{Receive $\msg$ according to the original algorithm}

\end{algorithmic}

\caption{Pseudocode description of the {\em Quantum Global Operation Algorithm}.}

\label{alg:quantum-global-operation-algorithm}
\end{algorithm}

\section{Analysis of the algorithm}
\label{sec:proof}

We will now show that the QGO algorithm is a correct implementation of the specification of an atomic global operation given in Definition~\ref{def:atomic-spec}. To state our theorem, we will first need to define the notion of a \emph{history} of an execution. 
\begin{definition}
    Let $F$ be some set of events, called the \emph{filter}. Then for any execution $X = (\fstate^0, \vec{E})$, the \emph{history} of that execution filtered through $F$ is denoted $\Hist_F(X)$, and consists of the linear sequence of events in $\vec{E}$ that are contained in $F$.
\end{definition}
In the context our algorithm, we will take the filter set $F$ to be the set of events that are either invocations, responses, or operations of the base algorithm $\A$. The idea is that a history consists of the steps in an execution that have observable external effects that we care about. In what follows, we will fix this filter set and suprress the $F$ subscript, merely writing $\Hist(X)$ for the history. The next definition allows us to compare histories of our algorithm $\At$ with the specification $\deltahat$.

\begin{definition}
    For an execution $\Xt = (\fstate, \vec{\Et} = (e^1, e^2, \dots))$ of $\At$ and an execution $\Xhat = (\fstatehat, \vec{\Ehat} = (\wh{e}^1,  \wh{e}^2, \dots)) $ of $\deltahat$, we say that event $e \in \vec{E}$ and $\wh{e} \in \vec{\Ehat}$ satisfy $e \simeq \wh{e}$ if both events are in $F$, and $e$ and $\wh{e}$ correspond to the same action: either (1) $e$ are $\wh{e}$ are both invocations on the same processor $\proc$ of the same global operation $G$, or (2) $e$ and $\wh{e}$ are response events on the same processor that give the same response, or (3) $e$ and $\wh{e}$ are base events corresponding to the same step in the algorithm $\A$.
    
    We say that
    \[ \Hist(\X) \simeq \Hist(\Xhat) \]
    if $|\vec{E}| = |\vec{\Ehat}|$, and for every $i$, $e^i \simeq \wh{e}^i$.
\end{definition}

Armed with these definitions, we can now state our theorem about the correctness of our algorithm $\At$. 

\begin{theorem}\label{thm:main}
Consider any finite execution $\Xt = (\fstatet^0, \Et, <_\Xt)$ of $\At$ that has no pending global operation and no concurrent invocations of global operations (i.e every invocation happens after all responses from the previous invocation have been given).
There is an execution $\Yt$ of $\At$ and a specification execution $\Yhat$ of $\deltahat$ such that $\Xt \ceq \Yt$ and $\Hist(\Yt) \simeq \Hist(\Yhat)$.
\end{theorem}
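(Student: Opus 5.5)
The plan is to follow Chandy and Lamport: reorder $\Xt$ so that all events of each global operation that act through $G$ form a contiguous block. The theorem's hypotheses let us treat one global operation at a time. There is no concurrency and nothing is pending, so we split $\Xt$ at the invocation events into segments, each containing exactly one complete global operation (invocation, the $\ProcessNewGlobalOp$ executions, marker traffic, and all responses) interleaved with base events. By Lemma~\ref{lem:substitution} it suffices to produce, for each segment $\Xt_0$, an equicausal fragment $\Yt_0$ with the same final state, and then concatenate the fragments.

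For a fixed global operation, call an event \emph{pre-marker} if it occurs on processor $\proc$ before $\proc$ executes line 3 (its $\Execute_\proc(G.\proc)$ event), and \emph{post-marker} otherwise. The invocation counts as pre-marker; responses are post-marker. The key causal claim is that no post-marker event causally precedes a pre-marker event, i.e.\ the post-marker set $P$ is closed under $\Lcal_F$. Precedences of type A1 trivially stay within processors. For type A2, suppose a message is sent by $\proc$ after $\proc$ broadcast its marker. By FIFO the marker reaches the recipient $\proc'$ first, so $\proc'$ has already executed $\ProcessNewGlobalOp$ when the message arrives, and the reception is post-marker. Given this claim, the total order that lists all pre-marker events (in their original relative order) and then all post-marker events is a linear extension of $\cprec_{\Xt}$. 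To realize it, I would repeatedly apply Lemma~\ref{lem:inversion-swap} together with Lemma~\ref{lem:substitution}, as in the proof of Lemma~\ref{lem:move-to-end}: bubble adjacent post/pre inversions, each of which is causally unrelated by the claim. This yields $\Yt_0\ceq\Xt_0$ with the same final state, hence the same state by Theorem~\ref{thm:equiv}.

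Next I would refine $\Yt_0$ so that immediately after the pre-marker block come all $\Execute_\proc(G.\proc)$ events and all $\Execute_\msg$ events on messages that are recorded. A recorded message is one received on a channel still in the waitset, i.e.\ sent before the sender's marker and received after the receiver's marker. These execute events apply tensor-factor local operations: each $\Execute_\proc$ is the first post-marker event on $\proc$, and each recorded $\Execute_\msg$ depends only on its reception. By the commutation argument of Fact~\ref{fact:commuting-quantum-operations} and Lemma~\ref{lem:inversion-swap}, I expect to move them all to the front of the post-marker block, keeping each reception together with its execute. This is the \textbf{main obstacle}. The reception and its $\Execute_\msg$ are separate events in the same atomic procedure. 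Pulling the reception earlier only crosses events on other processors, but after that, in the cut state, the message sits already received on $\proc$, while the specification applies $G.\msg$ to messages still in flight. I would handle this by showing that at the cut, the set of in-flight messages on $\chan$ (sent pre-marker and not received pre-marker) is exactly the set recorded by $\proc'$. This is the standard Chandy–Lamport channel-state argument: such messages precede the marker on the FIFO channel and arrive after $\proc'$ opened its record. I would then map the contiguous block of execute events to the single atomic $\deltahat$ event, whose effect is the tensor product $\Lambda_G^{\vec r}$. The factored form from decomposability gives the same quantum state and classical outcomes, and the same $R_\proc$ as the $\resp$ record.

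Finally, I would build $\Yhat$ by replaying $\Yt$ and keeping only $F$-events plus the atomic-execution event. Base events map to base events, because the augmented registers $\xcstatet$ are invisible to $\delta$ and $\Receive$ forwards every message to $\A$. Markers are dropped: they carry no quantum part and do not affect $\A$'s classical state. The invocation maps to the invocation, and each response maps to a response with the matching recorded $R_\proc$. Checking $\deltahat$-validity step by step, with states related by forgetting the $\xcstatet$ registers and the markers in channels, gives $\Hist(\Yt)\simeq\Hist(\Yhat)$.
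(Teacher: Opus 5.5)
There is a genuine gap, and it is in two places. First, your key causal claim is false for the cut you defined. The marker sends of $\proc$'s broadcast (line 4) come after its $\Execute_\proc(G.\proc)$, so they are post-marker. The \emph{first} marker reception at a neighbor $\proc'$ happens before $\proc'$ reaches line 3, so it is pre-marker. By A2, a post-marker event therefore causally precedes a pre-marker event. Your A2 check only treats messages sent \emph{after} the broadcast and overlooks the markers themselves. Consequently the linear extension ``all pre-marker events, then all post-marker events'' does not exist, since it would receive markers before they are sent. Your bubble sort would eventually have to swap a marker send with its own reception, which Lemma~\ref{lem:inversion-swap} does not allow. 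Moving the boundary to just before the first marker reception repairs the cut. But then $\Execute_\proc$ is no longer the first post event, and the whole marker-propagation chain must be pulled forward together with the $\Execute_\proc$ events. That is exactly the paper's three-way split: the first marker reception (or invocation), $\Execute_\proc(G.\proc)$ and the broadcast form a middle block of op-events. Claim~\ref{clm:pre-op-post} then shows that post/pre, post/op and op/pre inversions are causally unrelated.

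Second, the recorded-message operations cannot be brought next to the cut by equicausal moves. The reception of a recorded $\msg$ at $\proc'$ is preceded under A1 by every event of $\proc'$ since its own $\Execute_{\proc'}(G.\proc')$, including filtered base events. So ``pulling the reception earlier only crosses events on other processors'' is false. Moreover, $\Execute_\msg$ stays after that reception in every execution equicausal to $\Xt$. Your channel-state argument (the messages in flight at the cut are exactly the recorded ones) is correct and needed. However, it does not produce the contiguous block that you then map to the atomic $\deltahat$ step. The missing idea is the paper's Claim~\ref{clm:swap-reception}. You swap $\Execute_\msg$ with its reception. This is \emph{not} equicausal, but it preserves the final state because the reception only relabels $\quantum_\msg$ and $G.\msg$ touches only $\msg$'s registers. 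You then slide $\Execute_\msg$ alone back to the cut while $\msg$ is in flight, via Lemma~\ref{lem:inversion-swap}. This gives an execution $\Zt$ that is not equicausal with $\Yt$ but has the same history, since message operations lie outside $F$. $\Yhat$ is then read off from $\Zt$, which is why the theorem relates $\Yhat$ to $\Yt$ only through histories.
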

Before we prove the theorem, let us say a few words about why this is the right correctness condition for the algorithm. The basic message of Chandy and Lamport's original analysis~\cite{ChandyLamport} is that the global snapshot obtained by their algorithm in execution $X$ is not guaranteed to be equal to the global state at any given point in time in $X$, but there is still guaranteed to exist an execution $Y$ that has the same starting and final state as $X$, and such that the global state of $Y$ at some point in time matches the output of the CL algorithm. In our theorem, we show a slightly stronger guarantee: given an execution $\Xt$ of our algorithm, we show that there exists an execution $\Yt$ that shares not just its initial and final state with $\Xt$ but is causally equivalent to $\Xt$ (recall that by the definition $\ceq$, this implies that $\Yt$ and $\Xt$ share the same initial state, and by Theorem~\ref{thm:equiv}, this implies that the final states of $\Yt$ and $\Xt$ match). The specification execution $\Yhat$ in our theorem is a means of formalizing the claim that the output of our snapshot algorithm matches the global state of some point in $\Yt$, in a way that generalizes to other types of global operations beyond measurement.

\begin{proof}[Proof of Theorem~\ref{thm:main}]
Since global operations are not concurrent, we can decompose the execution $\Xt$ into the following concatenation, where the $\Xt^i_{main}$ execution fragments correspond to the segment between the $i$th invocation and the last response it generates.
\[\Xt = (\fstatet^0, \Xt^0 \cc \Xt^1_{main} \cc \Xt^1 \cc \Xt^2_{main} \cc \cdots \cc \Xt^k_{main} \cc \Xt^k)\]

\paragraph{Constructing $\Yt$ by reordering processor operations:} Fix an arbitrary $i \in [1,k]$, and consider $\Xt_{main} = \Xt^i_{main}$ with event set $\Et_{main}$.
Our strategy will be to reorder $\Xt_{main}$ into an alternate execution $\Yt_{main}$ while preserving its essence.

First, we tripartition the events $\Et_{main}$ of $\Xt_{main}$ into $\Et_{pre}$, $\Et_{op}$, and $\Et_{post}$, as we describe below. For each processor $\proc \in \Procs$, define $\Et_{main, \proc}$ to be the set of events in $\Et_{main}$ occurring at processor $\proc$.
We note that each processor $\proc \in \Procs$ executes $G.\proc$ exactly once in $\Et_{main}$;
let $e^{op}_\proc \in \Et_{main}$ be the event at which it does so.
Also, let $e^r_\proc$ be the event at which $\proc$ first received a marker message indicating this global operation---with $e^r_{\proct} = e_i$---and let $\Et^s_\proc$ be the send events corresponding to $\proc$ broadcasting the marker.
By definition of the algorithm, for each $\proc \in \Procs$: $e^r_\proc$ immediately precedes $e^{op}_\proc$, which immediately precedes the broadcast $\Et^s_\proc$.
For the remainder of the argument, we identify these events together as $e^*_\proc$.
$\Et_{pre,\proc} \triangleq \set{e \in \Et_{main, \proc} \mid e <_{\Xt} e^*_\proc}$,
$\Et_{post,\proc} \triangleq \set{e \in \Et_{main, \proc} \mid e >_{\Xt} e^*_\proc}$, and
$\Et_{op,\proc} \triangleq \Et_{main, \proc} \setminus \p{\Et_{pre,\pi} \cup \Et_{post,\pi}}$.
Finally, define
$\Et_{pre} \triangleq \bigcup_{\proc \in \Procs} \Et_{pre,\proc}$, 
$\Et_{op} \triangleq \bigcup_{\proc \in \Procs} \Et_{op,\proc}$, and 
$\Et_{post} \triangleq \bigcup_{\proc \in \Procs} \Et_{post,\proc}$.
We call these three sets of events, \emph{pre-events}, \emph{op-events}, and \emph{post-events}. Note that the \emph{op-events} as defined here only include the global operations as applied to the processors, \emph{not} the operations on the messages (which will be in the post-events).

Note that since the events at each processor were partitioned independently, it is possible that post-events at one processor precede pre-events at another processor.
Our goal is to re-order the execution fragment $\Xt_{main}$ in an execution $\Yt_{main}$, such that all pre-events precede all op-events and all op-events precede all post-events in $\Yt_{main}$.
We do this inductively by eliminating \emph{inversions}, as is justified by proving the following claim.

\begin{claim} \label{clm:pre-op-post}
Consider an execution fragment $\Ut = \p{\fstatet_{pre}, \Et_{main}, <_\Ut}$ of $\At$ in which events $e_1$ and $e_0$ are consecutive with $e_1 <_\Ut e_0$.
If either: 
(a) $e_1 \in \Et_{post}$ and $e_0 \in \Et_{pre}$,
(b) $e_1 \in \Et_{post}$ and $e_0 \in \Et_{op}$, or
(c) $e_1 \in \Et_{op}$ and $e_0 \in \Et_{pre}$; then 
(1) $e_1 \ncprec e_0$, and thus 
(2) if we define $<_\Vt$ to be identical to $<_\Ut$ except that $e_0 <_\Vt e_1$, then $\Vt = \p{\fstatet_{pre}, \Et_{main}, <_\Vt} \ceq \Ut$ and $\Vt$ is a valid execution fragment of $\At$.
\end{claim}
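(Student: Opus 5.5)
Part (1) is the substance; part (2) follows from Lemma~\ref{lem:inversion-swap} once (1) is known. The plan is to show that in each of the three cases $e_1$ and $e_0$ lie on different processors and are not the send/receive pair of a single message. Since $e_1$ and $e_0$ are consecutive in $<_\Ut$, these are the only ways the primitive relationships A1/A2 could link them. Any other causal path from $e_1$ to $e_0$ would pass through an event strictly between them in $<_\Ut$, because $\cprec_\Ut$ refines $<_\Ut$, and there is no such event.

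\emph{Same processor is impossible.} Suppose $e_1$ and $e_0$ both occur at $\proc$. The sets $\Et_{pre,\proc}$, $\Et_{op,\proc}$, $\Et_{post,\proc}$ are defined by position relative to $e^*_\proc$ in the processor-local order. In each of (a), (b), (c), $e_1$ sits in a later class than $e_0$, so $e_1$ comes after $e_0$ in $\proc$'s local order. That local order is preserved by $<_\Ut$: we only ever produce $\Ut$ by equicausal reorderings of $\Xt_{main}$, so the per-processor order is fixed by A1. This contradicts $e_1 <_\Ut e_0$. The case where one of the events is the block $e^*_\proc$ (treated as a unit) is handled the same way.

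\emph{Send/receive of one message is the key step.} Suppose $e_1 = \send(\proc,\msg,\proc')$ and $e_0 = \receive(\proc',\msg,\proc)$. Since $e_1 \in \Et_{post}\cup\Et_{op}$, the send happens at or after $e^*_\proc$. If it is strictly after, then by FIFO the marker that $\proc$ broadcast in $\Et^s_\proc$ precedes $\msg$ on the channel $\gchan{\proc}{\proc'}$. Hence $\proc'$ has already received a marker before receiving $\msg$, so $\proc'$ has passed $e^r_{\proc'}$, and therefore $e^*_{\proc'}$, before $e_0$. That makes $e_0 \in \Et_{post}$, contradicting (a), (b), (c). If instead $e_1$ is one of the marker sends in $e^*_\proc$, then $e_0$ is the reception of a marker. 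Either it is the first marker for $\proc'$, so $e_0 = e^r_{\proc'} \in \Et_{op}$, or $\proc'$ already had one, so $e_0 \in \Et_{post}$. Neither is pre, which rules out (a) and (c). In case (b), $e_1 \in \Et_{post}$ means it is not part of $e^*_\proc$, so we are back in the previous subcase.

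I expect the main obstacle to be the bookkeeping around treating $e^*_\proc$ as a single composite event. This block contains a receive, the operation, and several sends, so the "no intermediate event" argument and Lemma~\ref{lem:inversion-swap} have to be applied to it as a unit, which requires a two-event swap lemma for a block of events at one processor. I would handle this by noting that the block acts only on $\proc$'s registers and its adjacent channel ends, and by iterating Lemma~\ref{lem:inversion-swap} together with Lemma~\ref{lem:substitution} over its constituent events.

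Given (1), Lemma~\ref{lem:inversion-swap} applied to the two-event fragment, followed by Lemma~\ref{lem:substitution} to splice it back into $\Ut$, yields (2): $\Vt \ceq \Ut$, $\Vt$ has the same final state, and $\Vt$ is a valid fragment of $\At$.
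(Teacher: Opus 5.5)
Your proposal is correct and follows essentially the paper's argument. You rule out $e_1,e_0$ sharing a processor via the per-processor order, reduce to $e_1$ being the send of a message received at $e_0$, and eliminate that case with FIFO plus the fact that markers are sent only in op-events, before invoking Lemma~\ref{lem:inversion-swap}; the paper does the same, except it handles (b) by noting that op-receives are marker receipts while post-sends are never markers, whereas you fold (b) into (a) by splitting on whether $e_1$ is a marker send. The block-level obstacle you anticipate does not arise: $\Et_{op}$ consists of the individual constituent events of $e^*_\proc$, and the claim only swaps two adjacent individual events, so Lemma~\ref{lem:inversion-swap} applies directly with no block version needed.
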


\begin{proof}
We prove conclusion (1) by contradiction.

Let $\proc_0$ and $\proc_1$ be the processes at which events $e_0$ and $e_1$ occur, respectively.
By the definition of the tripartition of events at each given process $\proc$ into pre-, atomic-, and post-events, it is clear that $\proc_0 \ne \proc_1$ if any of (a), (b), or (c) holds.
Thus, by the definition of $\cprec$ and the fact that these two events $e_1$ and $e_0$ are adjacent in the total ordering $<_Y$, $e_1$ must be the sending of a message $m$, which is received at $e_0$.
We conclude the proof in cases based on which hypothesis holds.
\begin{itemize}
    \item 
    \ul{Case (a): $e_1 \in \Et_{post}$ and $e_0 \in \Et_{pre}$:}

    All marker send events are in $\Et_{op}$, thus the message sent in $e_1$ must be a non-marker message.
    By the definition of post-events, $e^*_{\pi_1} <_\Ut e_1$, and by the definition of pre-events, $e_0 <_\Ut e^*_{\pi_0}$.
    However, by the FIFO property of the channel $\gchan{\pi_1}{\pi_0}$, any message sent by $\pi_1$ after $e^*_{\pi_1}$ must be received by $\pi_0$ after $e^*_{\pi_0}$---since $e^*_{\pi_0}$ corresponds with $\pi_0$'s first receipt of the marker.
    Consequently, we conclude that $e^*_{\pi_0} <_\Ut e_0$, which is a contradiction.

    \item 
    \ul{Case (b) $e_1 \in \Et_{post}$ and $e_0 \in \Et_{op}$:}

    The only receive events in $\Et_{op}$ are of the marker.
    Thus $e_1$ must be the sending of the marker by $\pi_1$ to $\pi_0$ and $e_0$ must be its reception.
    However, markers are only sent in events in $\Et_{op}$, which is a contradiction.

    \item
    \ul{Case (c) $e_1 \in \Et_{op}$ and $e_0 \in \Et_{pre}$:}

    The only send events in $\Et_{op}$ are of the marker.
    However, by the definition of pre-events, $\pi_0$ cannot receive the marker before $e^*_{\pi_0}$---since that event is the first receipt of the marker by $\pi_0$.
    Thus, we reach a contradiction.
\end{itemize}
That finishes the proof of (1).
Since $e_1 \ncprec e_0$, conclusion (2) follows by Lemma~\ref{lem:inversion-swap}.
\end{proof}

Since $\Xt_{main}$ is a finite run, we can induct on the claim to remove all inversions and conclude that there is a re-ordering of $\Yt_{main}$ of $\Xt_{main}$, such that $\Yt_{main} \ceq \Xt_{main}$ and all pre-events precede all op-events and all op-events precede all post-events in $\Yt_{main}$. In other words, there exist execution fragments $\Yt_{pre}, \Yt_{op}, \Yt_{post}$ consisting solely of events of the appropriate type, such that
\[ \Yt_{main} = \Yt_{pre} \cc \Yt_{op} \cc \Yt_{post}.\]
Thus, by repeated application of Lemma~\ref{lem:substitution} we can substitute each $\Xt^i_{main}$ with $\Yt^i_{main}$ to get 
\[\Yt = (\fstatet^0, \Xt^0 \cc \Yt^1_{main} \cc \Xt^1 \cc \Yt^2_{main} \cc \cdots \cc \Yt^k_{main} \cc \Xt^k) \ceq \Xt. \]

\paragraph{Constructing $\Zt$ by reordering message operations:}
\[ \Yt^i_{main} = \Yt^i_{pre} \cc \Yt^i_{op} \cc \Yt^i_{post} \]

We are going to replace the fragment $\Yt^i_{op} \cc \Yt^i_{post}$ by a reordered fragment $\Zt^i_{op,post}$. 
Let $\fstate^{pre}$ be the initial state of $\Yt_{op}$, and let
\[ \Yt^i_{op,post} = \Yt^i_{op} \cc \Yt^i_{post}.\] 
Let $\vec M$ be the sequence of all messages $\msg$ with a $e^{op}_\msg$ event in $\Yt^i_{main}$. 
Let $\vec{\Et}_{\vec{M}}$ be the sequence of $e^{op}_\msg$ events in order of $\vec{M}$.
Let $\vec{\Et}_{\vec{M},j}$ be the prefix of the first $j$ events in $\vec{M}$.
Let $\Et_{post, -j}$ be the sequence of of events in $\Et_{post}$ without the events of $\vec{\Et}_{\vec{M},j}$. Then define 
\[ \Zt^i_{op,post} = (\fstate^{op}, \vec{\Et}_{op} \cc \vec{\Et}_{\vec{M}, |\vec{M}|} \cc \vec{\Et}_{post, -|\vec{M}|}).\]
The following claim will show that $\Zt^i_{op,post}$ behaves as $\Yt^i_{op,post}$.

\begin{claim}\label{clm:swap-reception}
    Let $\Wt_j = (\fstate^{op}, \vec{ \Et}_{op} \cc \vec{\Et}_{\vec{M},j} \cc \vec{\Et}_{post, -j})$,
    and note that $\Wt_0 = \Yt^i_{op,post}$ and $\Wt_{|\vec M|} = \Zt^i_{op,post}$.
    For each $j~\in~[0, |\vec M|]$, $\Wt_j$ is a well-formed execution with the same final state as $\Yt^i_{op,post}$.
\end{claim}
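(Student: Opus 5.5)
We argue by induction on $j$. The base case $j=0$ is immediate, because $\Wt_0 = \Yt^i_{op,post}$ by definition. For the step, we compare $\Wt_j$ and $\Wt_{j+1}$. They differ in exactly one event: $e^{op}_{\msg_{j+1}}$, the application of $G.\msg_{j+1}$ to the $(j+1)$st message of $\vec M$. In $\Wt_j$ this event sits at its original place inside $\vec{\Et}_{post,-j}$. In $\Wt_{j+1}$ it has been moved backward to sit right after $\vec{\Et}_{\vec M,j}$. So it suffices to show that moving this single event backward, past a block of post-events, preserves well-formedness and the final state. We do this with successive adjacent transpositions, as in the proof of Lemma~\ref{lem:move-to-end} but in the reverse direction. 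Each transposition is justified by the commutation argument of Lemma~\ref{lem:inversion-swap} together with Fact~\ref{fact:commuting-quantum-operations}, and Lemma~\ref{lem:substitution} lets us splice each local swap back into the whole fragment.

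The substantive point is that $e^{op}_\msg$ can be swapped with each event $e$ that precedes it in $\Wt_j$ after the $\vec{\Et}_{\vec M,j}$ block. We treat $e^{op}_\msg$ as an operation local to the message component $\msg$, acting only on $\quantum_\msg$ and the classical register $\xcstatet_\msg$. We must check two things: that $e$ does not touch $\msg$, and that the guard enabling $e^{op}_\msg$ does not depend on $e$.

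For the first, let $\proc$ be the recipient of $\msg$ along channel $\chan$. By the algorithm, $e^{op}_\msg$ occurs only inside $\At.\Receive$ when $\chan \in \xcstatet_\proc.\waitset$, that is, after $e^*_\proc$ and before the marker on $\chan$ is received. By FIFO, $\msg$ was therefore sent before the sender's marker on $\chan$, hence in $\Yt^i_{pre}$ or earlier. It follows that throughout $\Yt^i_{op}\cc\Yt^i_{post}$, until its reception, $\msg$ is an in-flight message sitting in $\chan$. No event other than its own reception acts on its quantum or classical part. That reception is the event immediately before $e^{op}_\msg$, and we move it along with $e^{op}_\msg$. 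Operations on other processors or messages act on disjoint tensor factors, so Fact~\ref{fact:commuting-quantum-operations} gives equality of the quantum states.

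The classical registers also agree after reordering. The outcome $r_\msg$ is carried in the event itself, and the subsequent $\Append$ into $\xcstatet_\proc.\resp.\chan$ still happens at the original reception point.

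The main obstacle is making rigorous the move of $e^{op}_\msg$ ahead of $\msg$'s own reception event. Formally, in the moved position the message has not yet been received, so $G.\msg$ acts on $\msg$ while it is still in channel $\chan$. This is exactly what the specification's atomic step does to in-flight messages. We would handle it by splitting the atomic $\Receive$ procedure: the line that executes $G.\msg$ (which changes only $\xcstatet_\msg$ and $\quantum_\msg$) is detached and moved earlier, while the rest stays in place. Reception only relabels $\quantum_\msg$ without changing the global state, so applying $\Lambda^{r_\msg}_{G,\msg}$ before or after the relabeling yields the same density matrix. Well-formedness is preserved because no event touching $\msg$ lies between the new and old positions, and the FIFO and at-most-once properties are untouched. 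Carrying this through all $|\vec M|$ messages completes the induction.
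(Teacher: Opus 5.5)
Your proposal is correct and follows the paper's argument. There is one non-equicausal swap of $e^{op}_\msg$ past its own reception $e^r_\msg$, justified because reception only relabels $\quantum_\msg$. This is followed by equicausal backward swaps via Lemma~\ref{lem:inversion-swap}; the paper then invokes Theorem~\ref{thm:equiv}, while you use the final-state conclusion of each swap directly, which amounts to the same thing.

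One slip to fix: your sentence saying the reception is ``moved along with'' $e^{op}_\msg$ contradicts your last paragraph and the definition of $\Wt_{j+1}$. That definition keeps $e^r_\msg$ in place inside $\vec{\Et}_{post,-(j+1)}$, so only the $G.\msg$ event moves.
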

\begin{proof}
The base case of $j = 0$ follows, since $\Wt_0 = \Yt^i_{op,post}$.
The inductive hypothesis is that the claim is true for some $j \in [0, |\vec{M}|)$.

For the induction step, let $\msg$ be the $(j+1)$st message of $\vec{M}$.
The event just preceding $e^{op}_{\msg}$ in $\Wt_j$ is the reception $e^r_{\msg}$ of $\msg$ by its receiver process $\proc$.
Since applying $G.\msg$ to $\msg$ does not affect $\tau_\msg$, and the reception event acts as identity $\quantum_\msg$, the net effect of the reception and $e^{op}_{\msg}$ is equivalent to the application of $G.\chan$ on $\msg$ occurring before the reception, while the message is still in flight.
Let $\Wt_j'$ be the resultant execution obtained by swapping the events $e^{op}_{\msg}$ and $e^r_{\msg}$, so that the event $e^{op}_{\msg}$ occurs on the message while it is in flight, immediately before it is received at $e^r_{\msg}$. 
The execution $\Wt_j'$, thus, has the same final state as $\Wt_j$. Note that it is \emph{not} true that $\Wt_j'$ is equicausal to $\Wt_j$: this is because we are inverting causally related events $e^{r}_{\msg} \cprec_{\Wt_j} e^{op}_{\msg}$. Nevertheless, the fact that $\Wt_j'$ and $\Wt_j$ have the same final state indicates that this causal relation is somewhat ``illusory.'' 

Now, we take $\Wt_j'$ and further swap $e^{op}_{\msg}$ backwards until it is adjacent to $\vec{\Et}_{M,j}$. Each such swap can be done equicausally by using Lemma~\ref{lem:inversion-swap}, since $e^{op}_{\msg}$ acts only the message in flight and thus has no causal relationship to the event it is being swapped with. The final execution reached after performing these swaps is precisely $\Wt_{j+1}$. This shows that $\Wt_{j+1} \ceq \Wt_j'$, and thus, by Theorem~\ref{thm:equiv}, $\Wt_{j+1}$ has the same final state as $\Wt_j'$ and therefore $\Wt_j$.
\end{proof}

We remark that by construction $\Hist(\Zt^i_{op,post}) = \Hist(\Yt^i_{op,post})$.
Thus, $\Zt^i_{main} = \Yt^i_{pre} \cc \Zt^i_{op,post}$ is a well-formed execution with the same final state as $\Yt^i_{main}$.

Inductively replacing each $\Yt^i_{main}$ with the corresponding $\Zt^i_{main}$, we get
\[\Zt = (\fstatet^0, \Xt^0 \cc \Zt^1_{main} \cc \Xt^1 \cc \Zt^2_{main} \cc \cdots \cc \Zt^k_{main} \cc \Xt^k), \]
and note that $\Hist(\Zt) = \Hist(\Yt)$.

\paragraph{Constructing specification fragment $\Yhat$:}

We now construct a specification execution fragment $\Yhat$ with $\Hist(\Yhat) \simeq \Hist(\Zt)$.
The key idea is to exploit the fact that we can write each $\Zt^i_{op,post}$ as a concatenation 
\[ \Zt^i_{op,post} = \Zt^i_{op,M} \cc \Zt^i_{post},\]
where $\Zt^i_{op,M}$ contains exactly the events in $\Et^i_{op} \cup \Et^i_{\vec{M}}$ and $\Zt^i_{post}$ contains exactly the events in $\Et^i_{post, -|\vec{M}|}$. We use this to construct $\Yhat$ by (1) replacing all the events in each $\Zt^i_{op,M}$ block by a single event that atomically applies the global operation, and (2) omitting all other events that are not contained in our filter set $F$ (i.e. events associated with the ``paraphernalia'' of the algorithm, such as receiving marker messages). 

We now spell this out in slightly more detail. We want to construct a $\Yhat$ that is an execution for $\deltahat$ while having the same history as $\Zt$. 
Recall that the history is simply the temporal sequence of events that lie in the filter set $F$: in our case, events apart from the $G$ operations and associated paraphernalia in the algorithm (e.g. the marker messages). 
In order to preserve this history, we will construct $\Yhat$ by going through the events in $\Zt$ in temporal order and producing corresponding events in $\Yhat$. 
Each event that passes the filter $F$ corresponds to either a transition from the underlying algorithm $\A$, or an invocation or response event. 
All of these have straightforward corresponding transition in $\deltahat$. 
So for every such event, we simply append the corresponding event to $\Yhat$.

The case of events that do not pass the filter $F$ is a bit more complicated, and it is here that we exploit the structure of $\Zt$. Recall that in $\Zt$, for each invocation $i$, all $G$ operations on all processors and channels are grouped together into a block $\Zt^i_{op,M}$. 
We will replace this \emph{entire} block with the corresponding transition $\deltahat$ implementing the global operation $G$ atomically. Apart from these events, the only other events in $\Zt$ that do not pass the filter $F$ are the events where a processor records a message received along a channel before the marker message, and the events where the processor receives a marker message. We simply drop these events from the definition of $\Yhat$.

It is clear to see that this construction produces a $\Yhat$ whose history matches that of $\Zt$---this follows from the linear order in which we process events. To show that $\Yhat$ is a valid execution for $\deltahat$, we argue that each transition in $\Yhat$ is valid. This follows because the atomic global transition in $\deltahat$ accurately simulates the events in $\Zt^i_{op,M}$, while the events that we omit have no effect on the state outside of $\xcstatet$.

\end{proof}

\section{Final Remarks}
\label{sec:conclusion}

The basic take-away message we see from the proof of Theorem~\ref{thm:main} is that classical causality-based arguments are effective for analyzing quantum distributed systems. However, a reader familiar with Chandy and Lamport's original analysis might ask why our proof is longer and more involved than theirs. Interestingly, the answer to this is because of an essentially classical phenomenon which the quantum setting surfaces to our attention. Essentially, the main additional complication in our argument that is not present in~\cite{ChandyLamport} is the reordering of the global operations on messages, which is performed in Claim~\ref{clm:swap-reception}. This step is actually implicit in~\cite{ChandyLamport}, but passes unnoticed because the classical recording operation does not disturb the state, so it can be freely imagined to have occurred at any point in the message's history. In contrast, a general classical or quantum global operation will disturb the state of the message, so its ordering relative to other events is consequential. Moreover, the reason why special reasoning had to be performed for message operations is because of the ``syntactic'' nature of the definition of computational causality, which leads to an apparent causal relation between the global operation on the message in flight and the reception of the message. As remarked in the proof of Claim~\ref{clm:swap-reception}, this causal relation is illusory because the reception event does not touch the state of the message. We conjecture that a more ``semantic'' notion of causality could automatically recognize this and simplify this argument.

Another point that deserves further study is the interpretation of our theorem in terms of probabilities, an issue that once again is relevant classically but unavoidable in the quantum setting. In our theorem, the executions $\Xt$ and $\Yt$ are equicausal, which by Theorem~\ref{thm:equiv} means that $\Xt$ and $\Yt$ have the same final state. Because of our use of subnormalized density matrices, this means that in both $\Xt$ and $\Yt$, the probability that all measurements performed along the way produced the outcomes appearing in the execution must be equal. It is tempting to conclude from this that the distribution over responses produced by running $\At$ matches the distribution produced by running the specification $\deltahat$ in some sense. However, this does not follow: suppose we fix an algorithm for the adversarial scheduler and then run $\At$. Each possible sequence of measurement outcomes $\vec{r}$ from this experiment corresponds to a different execution $\Xt_{\vec{r}}$, and all our theorem says is that each one of these executions can be mapped to an equicausal execution $\Yt_{\vec{r}}$, and an associated $\deltahat$-execution $\Yhat_{\vec{r}}$. However, it is not clear that all these executions correspond to a meaningful experiment in which $\deltahat$ is run in interaction with a fixed adversary. We leave this question open for future work: perhaps ideas from the literature on linearizability and randomized algorithms~\cite{StrongLinearizability} will be relevant to solving it.

\bibliography{quantum-snapshots}

@article{ren2017ground,
  title={Ground-to-satellite quantum teleportation},
  author={Ren, Ji-Gang and Xu, Ping and Yong, Hai-Lin and Zhang, Liang and Liao, Sheng-Kai and Yin, Juan and Liu, Wei-Yue and Cai, Wen-Qi and Yang, Meng and Li, Li and others},
  journal={Nature},
  volume={549},
  number={7670},
  pages={70--73},
  year={2017},
  publisher={Nature Publishing Group UK London}
}

@article{eberhard1978bell,
  title={Bell’s theorem and the different concepts of locality},
  author={Eberhard, Philip Herbert},
  journal={Il Nuovo Cimento B (1971-1996)},
  volume={46},
  number={2},
  pages={392--419},
  year={1978},
  publisher={Springer}
}

@article{eberhard1989quantum,
  title={Quantum field theory cannot provide faster-than-light communication},
  author={Eberhard, Phillippe H and Ross, Ronald R},
  journal={Foundations of Physics Letters},
  volume={2},
  number={2},
  pages={127--149},
  year={1989},
  publisher={Springer}
}

@phdthesis{YunchaoThesis,
title={Shallow Quantum Circuits: Algorithms, Complexity, and
Fault Tolerance},
author={Yunchao Liu},
school={UC Berkeley},
year={2024},
howpublished={\url{https://www2.eecs.berkeley.edu/Pubs/TechRpts/2024/EECS-2024-181.pdf}}
}

@misc{LamportWebsite,
  title = {My Writings
},
    author={Leslie Lamport},
  howpublished = {\url{https://lamport.azurewebsites.net/pubs/pubs.html#chandy}},
  note = {Accessed: 2026-02-16}
}

@article{bravyi2018quantum,
  title={Quantum advantage with shallow circuits},
  author={Bravyi, Sergey and Gosset, David and K{\"o}nig, Robert},
  journal={Science},
  volume={362},
  number={6412},
  pages={308--311},
  year={2018},
  publisher={American Association for the Advancement of Science}
}

@article{wehner2018quantum,
  title={Quantum internet: A vision for the road ahead},
  author={Wehner, Stephanie and Elkouss, David and Hanson, Ronald},
  journal={Science},
  volume={362},
  number={6412},
  pages={eaam9288},
  year={2018},
  publisher={American Association for the Advancement of Science}
}

@article{khatri2021spooky,
  title={Spooky action at a global distance: analysis of space-based entanglement distribution for the quantum internet},
  author={Khatri, Sumeet and Brady, Anthony J and Desporte, Ren{\'e}e A and Bart, Manon P and Dowling, Jonathan P},
  journal={npj Quantum Information},
  volume={7},
  number={1},
  pages={4},
  year={2021},
  publisher={Nature Publishing Group UK London}
}

@book{kitaev2002classical,
  title={Classical and quantum computation},
  author={Kitaev, Alexei Yu and Shen, Alexander and Vyalyi, Mikhail N},
  number={47},
  year={2002},
  publisher={American Mathematical Soc.}
}

@book{nielsen2010quantum,
  title={Quantum computation and quantum information},
  author={Nielsen, Michael A and Chuang, Isaac L},
  year={2010},
  publisher={Cambridge University Press}
}

@article{bell1964einstein,
  title={On the {Einstein} {Podolsky} {Rosen} paradox},
  author={Bell, John S},
  journal={Physics Physique Fizika},
  volume={1},
  number={3},
  pages={195},
  year={1964},
  publisher={APS}
}

@article{brukner2014quantum,
  title={Quantum causality},
  author={Brukner, {\v{C}}aslav},
  journal={Nature Physics},
  volume={10},
  number={4},
  pages={259--263},
  year={2014},
  publisher={Nature Publishing Group UK London}
}

@article{einstein1935can,
  title={Can quantum-mechanical description of physical reality be considered complete?},
  author={Einstein, Albert and Podolsky, Boris and Rosen, Nathan},
  journal={Physical Review},
  volume={47},
  number={10},
  pages={777},
  year={1935},
  publisher={APS}
}

@article{LamportTimeClocksOrderingOfEvents,
  author       = {Leslie Lamport},
  title        = {Time, Clocks, and the Ordering of Events in a Distributed System},
  journal      = {Commun. {ACM}},
  volume       = {21},
  number       = {7},
  pages        = {558--565},
  year         = {1978},
  url          = {https://doi.org/10.1145/359545.359563},
  doi          = {10.1145/359545.359563},
  bibsource    = {dblp computer science bibliography, https://dblp.org}
}

@article{ChandyLamport,
  author       = {K. Mani Chandy and
                  Leslie Lamport},
  title        = {Distributed Snapshots: Determining Global States of Distributed Systems},
  journal      = {{ACM} Trans. Comput. Syst.},
  volume       = {3},
  number       = {1},
  pages        = {63--75},
  year         = {1985},
  url          = {https://doi.org/10.1145/214451.214456},
  doi          = {10.1145/214451.214456},
  bibsource    = {dblp computer science bibliography, https://dblp.org}
}

@inproceedings{Jayanti2025,
author = {Jayanti, Siddhartha Visveswara},
title = {On Interplanetary and Relativistic Distributed Computing},
year = {2025},
isbn = {9798400718854},
publisher = {Association for Computing Machinery},
address = {New York, NY, USA},
url = {https://doi.org/10.1145/3732772.3733563},
doi = {10.1145/3732772.3733563},
booktitle = {Proceedings of the ACM Symposium on Principles of Distributed Computing},
pages = {192–202},
numpages = {11},
location = {Hotel Las Brisas Huatulco, Huatulco, Mexico},
series = {PODC '25}
}

@inproceedings{GilbertGolabRelativisticDistributedSystems,
  author       = {Seth Gilbert and
                  Wojciech M. Golab},
  editor       = {Fabian Kuhn},
  title        = {Making Sense of Relativistic Distributed Systems},
  booktitle    = {Distributed Computing - 28th International Symposium, {DISC} 2014,
                  Austin, TX, USA, October 12-15, 2014. Proceedings},
  series       = {Lecture Notes in Computer Science},
  volume       = {8784},
  pages        = {361--375},
  publisher    = {Springer},
  year         = {2014},
  url          = {https://doi.org/10.1007/978-3-662-45174-8\_25},
  doi          = {10.1007/978-3-662-45174-8\_25},
  bibsource    = {dblp computer science bibliography, https://dblp.org}
}

@inproceedings{StrongLinearizability,
  author    = {Wojciech M. Golab and
               Lisa Higham and
               Philipp Woelfel},
  editor    = {Lance Fortnow and
               Salil P. Vadhan},
  title     = {Linearizable implementations do not suffice for randomized distributed
               computation},
  booktitle = {Proceedings of the 43rd {ACM} Symposium on Theory of Computing, {STOC}
               2011, San Jose, CA, USA, 6-8 June 2011},
  pages     = {373--382},
  publisher = {{ACM}},
  year      = {2011},
  url       = {https://doi.org/10.1145/1993636.1993687},
  doi       = {10.1145/1993636.1993687},
  bibsource = {dblp computer science bibliography, https://dblp.org}
}

@inproceedings{A97,
 author = {Abadi, Mart\'{\i}n and Gordon, Andrew D.},
 title = {A Calculus for Cryptographic Protocols: The Spi Calculus},
 booktitle = {Proceedings of the 4th ACM Conference on Computer and Communications Security},
 series = {CCS '97},
 year = {1997},
 isbn = {0-89791-912-2},
 location = {Zurich, Switzerland},
 pages = {36--47},
 numpages = {12},
 url = {http://doi.acm.org/10.1145/266420.266432},
 doi = {10.1145/266420.266432},
 acmid = {266432},
 publisher = {ACM},
 address = {New York, NY, USA},
}

@article{H,
 author = {Herlihy, Maurice},
 title = {Wait-free Synchronization},
 journal = {ACM Trans. Program. Lang. Syst.},
 issue_date = {Jan. 1991},
 volume = {13},
 number = {1},
 month = {January},
 year = {1991},
 issn = {0164-0925},
 pages = {124--149},
 numpages = {26},
 url = {http://doi.acm.org/10.1145/114005.102808},
 doi = {10.1145/114005.102808},
 acmid = {102808},
 publisher = {ACM},
 address = {New York, NY, USA},
}

@inproceedings{G,
 author = {Gibbons, P. B.},
 title = {A More Practical PRAM Model},
 booktitle = {Proceedings of the First Annual ACM Symposium on Parallel Algorithms and Architectures},
 series = {SPAA '89},
 year = {1989},
 isbn = {0-89791-323-X},
 location = {Santa Fe, New Mexico, USA},
 pages = {158--168},
 numpages = {11},
 url = {http://doi.acm.org/10.1145/72935.72953},
 doi = {10.1145/72935.72953},
 acmid = {72953},
 publisher = {ACM},
 address = {New York, NY, USA},
}

@inproceedings{dufoulon-magniez-pandurangan-podc-2025,
author = {Dufoulon, Fabien and Magniez, Fr\'{e}d\'{e}ric and Pandurangan, Gopal},
title = {Quantum Communication Advantage for Leader Election and Agreement},
year = {2025},
isbn = {9798400718854},
publisher = {Association for Computing Machinery},
address = {New York, NY, USA},
url = {https://doi.org/10.1145/3732772.3733509},
doi = {10.1145/3732772.3733509},
booktitle = {Proceedings of the ACM Symposium on Principles of Distributed Computing},
pages = {230–240},
numpages = {11},
location = {Hotel Las Brisas Huatulco, Huatulco, Mexico},
series = {PODC '25}
}

@article{denchev-pandurangan-2008,
author = {Denchev, Vasil S. and Pandurangan, Gopal},
title = {Distributed quantum computing: a new frontier in distributed systems or science fiction?},
year = {2008},
issue_date = {September 2008},
publisher = {Association for Computing Machinery},
address = {New York, NY, USA},
volume = {39},
number = {3},
issn = {0163-5700},
url = {https://doi.org/10.1145/1412700.1412718},
doi = {10.1145/1412700.1412718},
journal = {SIGACT News},
month = sep,
pages = {77–95},
numpages = {19}
}

@inproceedings{balliu-et-al-2025,
author = {Balliu, Alkida and Brandt, Sebastian and Coiteux-Roy, Xavier and d'Amore, Francesco and Equi, Massimo and Le Gall, Fran\c{c}ois and Lievonen, Henrik and Modanese, Augusto and Olivetti, Dennis and Renou, Marc-Olivier and Suomela, Jukka and Tendick, Lucas and Veeren, Isadora},
title = {Distributed Quantum Advantage for Local Problems},
year = {2025},
isbn = {9798400715105},
publisher = {Association for Computing Machinery},
address = {New York, NY, USA},
url = {https://doi.org/10.1145/3717823.3718233},
doi = {10.1145/3717823.3718233},
booktitle = {Proceedings of the 57th Annual ACM Symposium on Theory of Computing},
pages = {451–462},
numpages = {12},
location = {Prague, Czechia},
series = {STOC '25}
}

\end{document}